\newtheorem{theorem}{Theorem}[section]
\newtheorem{lemma}[theorem]{Lemma}
\newenvironment{proof}[1][Proof]{\begin{trivlist}
\item[\hskip \labelsep {\bfseries #1}]}{\end{trivlist}}
\yahoo \url{www.yahoo.com}
\newcommand{\arrtimes}{t_{1}^{n}}
\newcommand{\deptimes}{t_{1}^{'n}}
\newcommand{\sizes}{s_{1}^{n}}
\newcommand{\aaone}{\mathcal{A}_{\text{A}}}
\newcommand{\efcfs}{\mathcal{E}^{c,\lambda_{2}}_{\mathrm{FCFS}}}
\newcommand{\etdma}{\mathcal{E}^{c,\lambda_{2}}_{\mathrm{TDMA}}}
\newcommand{\eptdma}{\mathcal{E}_{\mathrm{p-TDMA}}^{c,\lambda_{2}}}
\newcommand{\emax}{\mathcal{E}_{\mathrm{Max}}^{c,\lambda_{2}}}
\begin{document}

\title{Mitigating Timing Side Channel in Shared Schedulers
%\subtitle{[Extended Abstract]
}

\author{
\IEEEauthorblockN{Sachin Kadloor$^{\dag *}$, Negar Kiyavash$^{\ddag *}$, Parv Venkitasubramaniam$^\S$}\thanks{This work was supported in part by National Science Foundation through the grant CNS-1117701}
\IEEEauthorblockA{$^\dag$ ECE Department and Coordinated Science Lab.\\
$^\ddag$ ISE Department and Coordinated Science Lab.\\
$^{*}$University of Illinois at Urbana-Champaign\\
$^{\S}$ECE department, Lehigh University\\
\{kadloor1,kiyavash\}@illinois.edu, parv.v@lehigh.edu}
}

\maketitle
\begin{abstract}%.25 p.
In this work, we  study information leakage in timing side channels that arise in the context of shared event schedulers. Consider two processes, one of them an innocuous process (referred to as Alice) and the other a malicious one (referred to as Bob), using a common scheduler to process their jobs. Based on when his jobs get processed, Bob wishes to learn about the pattern (size and timing) of jobs of Alice. Depending on the context, knowledge of this pattern could have serious implications on Alice's privacy and security. For instance, shared routers can reveal traffic patterns, shared memory access can reveal cloud usage patterns, and suchlike. We present a formal framework to study the information leakage in shared resource schedulers using the pattern estimation error as a performance metric. The first-come-first-serve (FCFS) scheduling policy and time-division-multiple-access (TDMA) are identified as two extreme policies on the privacy metric, FCFS has the least, and TDMA has the highest. However, on performance based metrics, such as throughput and delay, it is well known that FCFS significantly outperforms TDMA. We then derive two parametrized policies, accumulate and serve, and proportional TDMA, which take two different approaches to offer a tunable trade-off between privacy and performance.

\end{abstract}

\section{Introduction}

It has long been known that resources shared between processes lead to covert and side channels that can leak information from one process to another. A covert communication channel is one which is not normally intended to be used for communication \cite{covertchannelsguide}, infact, its existence is usually unknown to the system designer. Covert channels are typically used by a trusted insider with access to a secret piece of information to covey it to an outsider. Examples of covert channels include, embedding information in the unused header files of network protocols \cite{tcpheadercovert}, and two processes running on a computer communicating with each other through the access patterns of the shared memory \cite{coverttimingchannelsinprocessors}. In a covert channel, one process structures its use of the shared resource in a particular pattern so as to communicate secret information to another.  Covert channels have been studied extensively in the context of multi-level secure systems, where they can be used to create forbidden information flows~\cite{Mill87,Wray91}.

In contrast to a covert channel, in a side channel, one process tries to learn something about the operation of another without the latter's cooperation. Side channels, therefore, focus on information that is leaked \emph{incidentally} by a victim process, rather than explicitly coded by a sender. Examples of such channels include: an attacker non-invasively improving his odds of guessing cryptographic keys used by a crypto-system by observing its instantaneous power usage \cite{KJJ98}, making use of the fact that the power usage for a certain set of CPU operations is higher than others; an eavesdropper trying to guess the underlying communication by observing the encrypted packets flowing across a link \cite{Hintz02}, where he makes use of the fact that encryption does not alter the volume and timing of packets flowing on the link.

In this work, we consider the timing side channel that exists inside of a shared scheduler. A timing side channel is one in which information is conveyed (leaked) through the timings of various events. Schedulers are used in multi-tasking systems where they dictate how a finite resource is to be divided among several competing processes. Examples of such systems include: hardware resources (CPU, storage, buses) inside of a computer being shared among different processes, multiple network streams flowing through a common router, a cloud based shared computing infrastructure, etc.. In such systems, the \emph{quality-of-service} experienced by one user of the system is directly influenced by the activities of the other users of the system. For example, a sudden slowdown in web access speeds of one user could indicate an increase in network usage from the other users sharing the same network infrastructure. In this manner, a shared scheduler incidentally creates a timing side channel through which a malicious user could potentially learn about the activities of the other users using the system. 

  For the remainder of this work, these systems are abstracted out as a processor being shared by multiple users, as shown in Figure \ref{fig:sharedscheduler}. Jobs arrive from multiple users to the scheduler. The processor can work on only one job at a time. The scheduler queues up the incoming jobs, and dynamically decides on how the processor time gets divided amongst these competing jobs. In this manner, the delays experienced by jobs from one user are directly influenced by the number, timing, and size of the jobs issued by the other users.  
  
  \begin{figure}[t]
\begin{center}
\includegraphics[width=0.8\columnwidth]{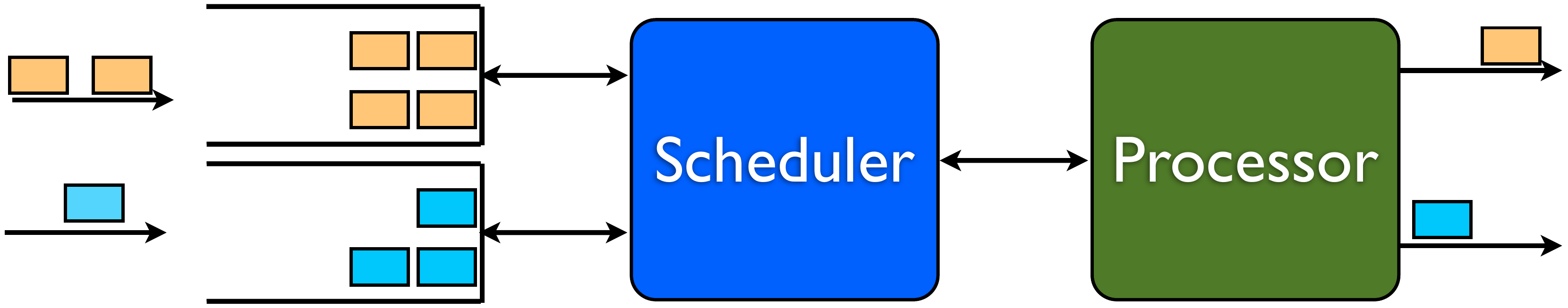}
\caption{An abstraction of a shared scheduling system}
\label{fig:sharedscheduler}
\end{center}
\end{figure}

Some of the commonly used schedulers are \emph{first-come-first-served} (FCFS): jobs are served in the order they are issued to the scheduler, \emph{time-division-multiple-access} (TDMA): each user is pre-assigned time slots during which the processor serves only jobs from that user, \emph{round-robin} (RR): one job is served from each of the queued up users in succession, \emph{priority schedulers}: one class of jobs are served ahead of another class of jobs according to some pre-defined rule, \emph{shortest-job-first} (SJF): a particular type of priority scheduler where jobs that require smaller time to be processed are served first. Each of these schedulers is optimized to perform well on a different performance metric, such as, \emph{throughput}: number of job completions per unit time, \emph{average delay}: the mean time difference between the job completion and the job arrival, \emph{fairness}: a metric to measure if the resource is being distributed equally/fairly between the processes, etc.. Like any engineered system, a system designer has to make a calculated trade-off among these conflicting metrics while picking a suitable scheduler. \emph{We argue that while choosing a scheduler that serves jobs from multiple non-trusting users, one has to consider the privacy offered by it along with the other metrics.} In this work, we explore the resulting trade-offs one has to make if privacy is taken into account.

We consider the scenario when a scheduler is serving jobs from two users, where one of them is an innocuous user and other a malicious one. The malicious user, Bob, wishes to learn the pattern of jobs sent by the innocuous user, Alice. Bob exploits the fact that when the processor is busy serving jobs from Alice, his own jobs experience a delay. As shown in Figure \ref{fig:sysmodel}, Bob computes the delays experienced by his jobs and uses these delays to infer about the times when Alice tried to access the processor, and possibly the sizes of jobs scheduled. Learning this traffic pattern from Alice can aid Bob in carrying out traffic analysis attacks.

 \begin{figure}[t] %  figure placement: here, top, bottom, or page
   \centering
   \includegraphics[width=0.6\columnwidth]{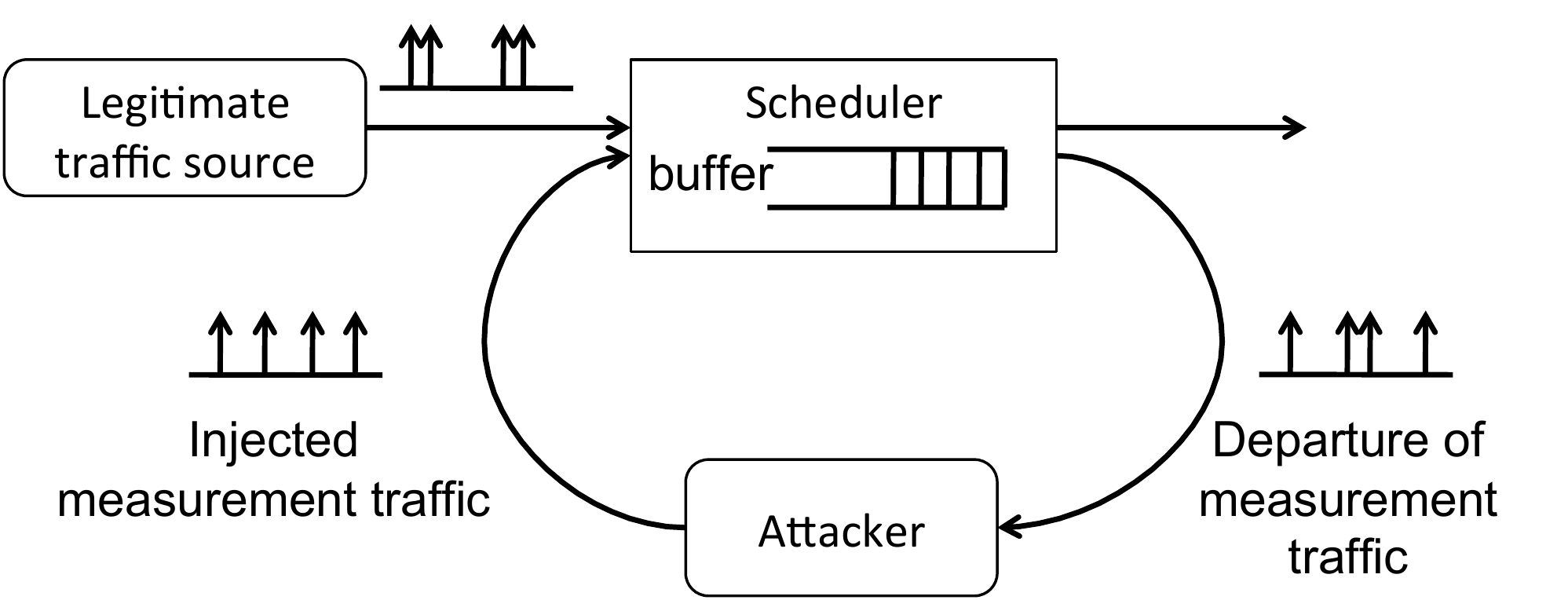}
   \caption{An event/packet scheduler being exploited by a malicious user to infer the arrival pattern from the other.}
   \label{fig:sysmodel}
\end{figure}

A summary of our main contributions follow.
\begin{enumerate}
\item \emph{Development of an analytical framework to characterize the privacy performance of a scheduling policy.} While evaluating the privacy performance of a scheduling policy, we consider the scenario described in Figure \ref{fig:sysmodel}, Alice, the innocuous user, and Bob, the malicious user, are the only two users of the system. Arrivals from Alice are modeled as a Poisson process. Bob, on the other hand, is allowed to pick the time and size of the jobs he issues to a scheduler. Furthermore, he is assumed to know the scheduling policy being used. The privacy offered by the policy is defined as the mean square error incurred by Bob when estimating Alice's arrival pattern when he picks an optimal attack strategy. Higher the error, the better the policy is at protecting the privacy of the users. 

\item \emph{Evaluation of the privacy metric of commonly deployed scheduling policies.} FCFS is one of the most commonly deployed scheduling policies owing to its simplicity. It is throughput optimal and results in minimal queuing delay. However, by the nature of the policy, there is a large correlation between the waiting times of jobs of one user and the arrival pattern of the other. Consequently, as shown in Section \ref{sec:fcfs} by the explicit construction of one attack strategy, FCFS is the weakest policy on the privacy metric. On the other hand, TDMA, wherein the delays experienced by jobs of one user are completely independent of the arrivals of the other, ranks highest on the privacy metric. However, TDMA is a highly inefficient policy in terms of throughput and delay, especially when the traffic is varying. It is especially inefficient when the number of users using the scheduler is large.
\item \emph{Design policies that offer good privacy-delay tradeoffs.} We design two parametric policies, \emph{accumulate-and-serve} and \emph{proprtional-TDMA} which can be tuned to trade-off performance for improved privacy. These policies take two different approaches to achieve privacy, one pre-distorts timing information, other pre-allocates processor times to different users. Unlike TDMA, both these policies are throughput-optimal. 
\end{enumerate}

\section{Related Works}
The current work was motivated by an earlier work of ours, \cite{icc09sachin}, wherein, we demonstrated that a side channel does exist in DSL routers, wherein a network path is shared between all the incoming packet streams. The attack is briefly described below. The attacker sends equally spaced ping packets to Alice's DSL router from his home computer. He then observes the round trip times (RTTs) of the packets. The traffic entering Alice's computer, and Bob's RTTs are shown in Figure \ref{fig:yahoo}; as can be observed, there is a clear correlation between the two. The fact that DSL routers employed FCFS scheduling aided the attack. Such an attack gives the attacker a noisy observation of the timing and the sizes of packets entering Alice's computer. Although the contents of the packets are not revealed, learning such timing information opens up the possibility for the attacker to carry out remote traffic analysis. Some of the instances of traffic analysis include recovery of information about keystrokes typed \cite{song+:sec01,sec09}, websites visited~\cite{liberatore-levine:ccs06,bissias+:pet05}, or words spoken over VoIP~\cite{spotmeifyoucanoak}. In all these works, the attacker observes Alice's traffic and uses statistical inference techniques to carry out the attack.  In \cite{clogtor}, the authors consider the scenario where a client is connected to a rogue website using a TOR network, which is designed to protect the identity of the users. The website modulates the traffic sent to the client. The website can then try to simultaneously send data through each of the TOR nodes and measure the delay incurred. By correlating this delay with the traffic it sent to the client, the website can obtain its identity, thus defeating the purpose of TOR. While that attack is no longer viable \cite{newattacktor}, the reason is that there are many more TOR nodes now than they were when \cite{clogtor} was published, and not because the timing based side channel has been eliminated. In \cite{xun_pets}, the authors exploit the side channel in a DSL router to infer the website being visited by the victim.

The increasing interest in cloud computing, wherein users issue jobs to a shared computing platform, opens up possibilities for such an attack, and is another motivation to study these timing side channels. In \cite{cloud}, the authors map the internal infrastructure of Amazon's EC2 cloud computing service, and demonstrate that it is possible for an attacker to place his virtual machine (VM) on the same physical computer as the target's VM. They shown that once placed on the same physical computer, any timing channel created by sharing of the processor can be exploited by the attacker. In cryptographic side-channels, the attacker aims at recovering cryptographic keys by utilizing the timing variations required for cryptographic operations \cite{brumley2005remote,kocher1996timing}.

While timing covert channels have been studied extensively, most notably \cite{Giles&Hajek:02IT}, there has been very little work in studying timing side channels. Most of the solutions proposed to mitigate the information leakage in side channels are system specific. Some examples are: the most common mitigation technique against such channels is cryptographic
blinding \cite{chaum1983blind,kocher1996timing}; in the context of language-based security, Agat \cite{agat2000transforming} introduces a program transformation to remove timing side channels from programs written in a sequential imperative programming language; the NRL Pump proposed for mitigating timing channels that arise in multilevel security systems (MLS) when a high confidentiality processes can communicate through Acks he sends to a low confidentiality processes \cite{Moskowitz&Miller:92IT}.

The paper is organized as follows. In Section \ref{sec:sysmodel}, we formally introduce a system model, and the metric of performance that we use to compare the privacy of different scheduling policies. In Section \ref{sec:unif_upp}, we quantify the highest degree of privacy that any scheduling policy can guarantee, and demonstrate that TDMA scheduling policy provides this. We start in Section \ref{sec:fcfs} discussing an attack strategy against FCFS policy to demonstrate that it does leak significant information between the flows. In Section \ref{sec:securepolicies}, we discuss two strategies for designing provably secure scheduling policies. Finally, in Section \ref{sec:delay}, we derive the mean delay experienced by a job when each of these policies is used, and comment on the delay-privacy tradeoff offered by the different policies.

\begin{figure}[t]
\centering
    \subfigure[DSL traffic entering Alice's router]{
    \includegraphics[width=0.39\textwidth]{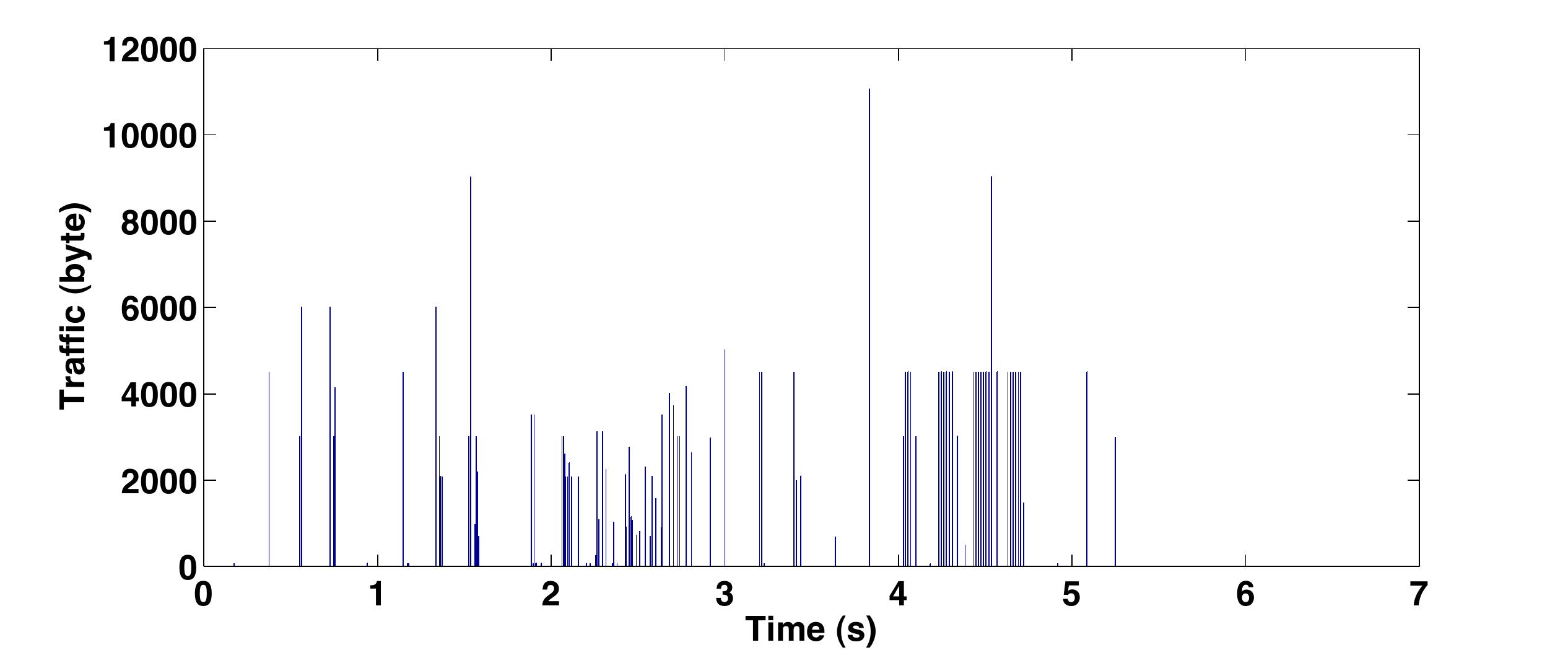}
    \label{fig:traffic}
    }
    \subfigure[Round trip times of Bob's ping packets]{
        \includegraphics[width=0.39\textwidth]{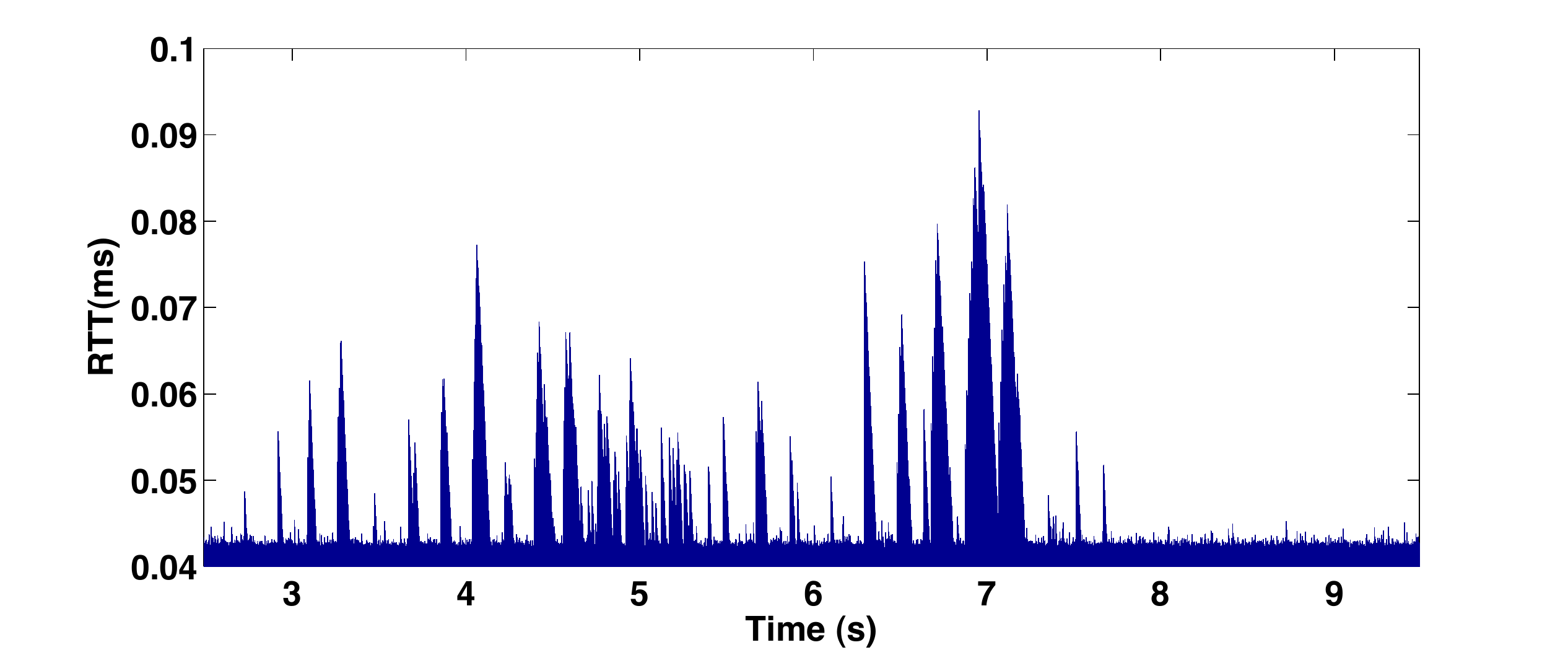}
        \label{fig:rtt}
        }
    \caption{Real traffic on a DSL line vs. observed probe RTTs when Alice is browsing the website \yahoo}\label{fig:yahoo}\vspace{-0.2in}
\end{figure}

\section{System Model and Definitions}\label{sec:sysmodel}
In this section, we introduce formally a model of the  scheduling system, and introduce a metric which measures the strength of the policy in preserving the privacy of the users. The scheduler is modeled as an infinite buffer server that is serving jobs from two users. The scheduler can serve jobs at a rate of one per unit time. We consider the scenario when one of them is an innocuous user and other a malicious one. However, the scheduler is unaware of who is malicious and who is innocuous; therefore any policy it picks should not distinguish between the users. The malicious user, Bob, wishes to exploit the queuing side channel described earlier to learn about the pattern of jobs sent by the innocuous user, Alice. Bob is assumed to know accurately the time when his jobs are issued, and the time it took for the scheduler to process it, i.e. the difference between the completion time of the job and the time when it was issued. Knowing the delays experienced by his jobs, Bob uses this information to guess the arrival pattern of jobs from Alice. 

The ability of Bob to successfully learn about Alice's arrival process depends heavily on Alice's arrival process itself. For example, on-off  patterns are easier to detect reliably compared to an arrival process that is less bursty. In order to ensure that the scheduling policies we design are robust to a variety of arrival patterns, Alice's arrival process will be modeled as a Poisson process of rate $\lambda_{2}$, with all the jobs of unit size. We do this partly because Poisson processes are known to have maximum entropy rate among processes of a given rate \cite{maxentropyrate}, and hence represent a rich class of arrival processes. Further, the analytical tractability of Poisson processes in a queuing system reveal the nature of fundamental trade-offs between privacy and delay in this system. We will comment on the case when Alice's traffic pattern follows a general arrival pattern later. Furthermore, we assume that Bob is aware of the statistical description of arrivals from Alice. A policy that guards the privacy of Alice in this scenario will also perform well when the attacker does not know a priori this rate.

\subsection{Measuring the strength of the scheduling policy: a privacy metric}
 Alice issues unit sized jobs to the scheduler according to a Poisson process of rate $\lambda_2$. The total number of jobs issued by Alice until time $u$ is given by $\aaone(u)$. The malicious user, Bob, also referred to as the attacker, issues his jobs at times $\arrtimes\doteq\{t_1,t_2,\ldots,t_n\}$, and is free to choose their sizes, $\sizes\doteq\{s_1,s_2,\ldots,s_n\}$, as well.\footnote{We have also worked on a version of this problem where the attacker is also forced to issue jobs only of unit size, refer \cite{Infocom2012_kadloor}. The present work is therefore a generalization of the former.} Let $\deptimes\doteq\{t_{1}^{'},t_{2}^{'},\ldots,t_{n}^{'}\}$ be the departure times of these jobs. Bob makes use of the observations available to him, the set $\{\arrtimes,\sizes,\deptimes\}$ and the knowledge of the scheduling policy used, in estimating Alice's arrival pattern. The arrival pattern of Alice is the sequence $\{X_k\}_{k=1,2,\ldots,N}$, where $X_k=\aaone(kc)-\aaone((k-1)c)$, is the number of jobs issued by Alice in the interval $((k-1)c,kc]$, referred to as the $k^{th}$ clock period of duration $c$. $Nc$ is the time horizon over which the attacker is interested in learning Alice's arrival pattern.

The privacy offered by a scheduling policy is measured by the long run estimation error incurred by Bob in such a scenario when he is free to choose the number of jobs he issues, times when he issues them and their sizes,  subject to a maximum rate constraint, and when he optimally estimates Alice's arrival pattern. Formally, the privacy offered by a scheduling policy is defined to be:
\begin{align}
\lefteqn{\mathcal{E}_{\mathrm{Scheduling\ policy}}^{c,\lambda_{2}} =\lim_{N\to \infty} \  \min_{n,t_{1}^{n},s_{1}^{n}:\frac{\sum\limits_{i=1}^{n}s_{i}}{Nc}<1-\lambda_{2}}}     \nonumber \\ 
& \hspace{0.75in} \sum_{k=1}^{N}  \mathbf{E}\left[\left(X_k-\mathbf{E}\left[X_k|t_{1}^{n},t^{'n}_{1},s_{1}^{n}\right]\right)^2\right], 
\label{eq:privacy}
\end{align}
where, the expectation is taken over  the joint distribution of the arrival times of Alice's jobs, the arrival times and sizes of jobs from the attacker and his departure times. This joint distribution is in turn dependent on the scheduling policy used, which is known to the attacker. Finally, the attacker is assumed to know the statistical description of Alice's arrival process, and he is allowed to pick $\sum\limits_{i=1}^{n}s_{i}/{Nc}$, the average rate at which he issues his jobs, to be any value that is less than $1-\lambda_{2}$, so as to keep the system stable. A scheduling policy is said to preserve the privacy of its users if the resulting estimation error is high.

\paragraph*{A game-theoretic viewpoint of the privacy metric}
The optimal value of the estimation error can be viewed in a game theoretic framework as a saddle point between the scheduler and Bob; Bob's task is to minimize estimation error, and the scheduler's task is to maximize it. However, we force the scheduler to \emph{play first} and announce the scheduling policy before the attacker chooses an attack strategy. This is motivated by practical considerations. In most system, the scheduling protocol is known to the users apriori. Further, in side channel attacks, it is not possible for the scheduler to identify a malicious user and tailor the policy accordingly. Consequently, this assumption empowers the attacker and the achievable privacy would be a benchmark for attackers equipped with lesser information.

Our motivation for using the minimum mean squared error as a metric of performance is as follows. The minimum mean squared error, as considered in this paper, does not conform to a specific adversarial learning technique, but serves as a universal lower bound over all adversarial strategies taking into account the complete available information for the entire duration of the system operation. A natural alternative metric would be to measure the information leakage using Shannon's equivocation $\lim \limits_{N\to \infty} \frac{1}{N} \sum_{k=1}^{N}H(X_{k}|t_{1}^{n},t^{'n}_{1},s_{1}^{n})$. While entropy serves as a measure of uncertainty, which guarantees a minimum probability of error for an adversary (Fano's inequality), MMSE bounds the actual error incurred. The purpose of quantifying privacy is to have a meaningful measure of how breachable a system is, and in that respect, both these measures provide that interpretation. Furthermore, the two metrics are related in the sense that both MMSE and entropy are functionals of the probability mass function of the same conditional random variable, $X_k|t_{1}^{n},t^{'n}_{1},s_{1}^{n}$. MMSE is the variance of the random variable, while equivocation is the entropy of the random variable. It can be shown, using proof techniques similar to those in \cite{entropy_error}, that large estimation error does correspond to large entropy, although the relationship between the two is not monotonic.  

In the subsequent section, we provide a bound on the maximum estimation error an attacker could incur, demonstrate that TDMA scheduling policy ensures that the attacker incurs this error. Subsequently, we  present a scheduling policy that achieves an estimation error close to the maximum estimation error at the cost of a limited increase in delay.
 
 \section{Performance of FCFS and TDMA on the privacy metric}\label{sec:fcfstdmaperformance}
 In this section, we prove that the FCFS and TDMA are two extreme policies on the privacy metric, FCFS offers the least privacy while TDMA offers the highest. 
 
 \subsection{An upper bound on the estimation error of the attacker}\label{sec:unif_upp}
 In order to gauge the amount of information that the attacker learns through this side channel, it is important to first study the amount of information that the attacker has even before performing any attack. In our case, the attacker is assumed to know the statistical description that governs the arrival pattern from Alice, that it is a Poisson traffic of rate $\lambda_{2}$. In this section, we compute the estimation error incurred by the attacker when he uses just this statistical description to estimate Alice's arrival pattern. This is the maximum estimation error that a rational attacker can incur, and will also serve as a benchmark to test the efficacy of a scheduling policy in preserving Alice's privacy. 
 
 \begin{theorem}
 Irrespective of the scheduling policy used, the maximum estimation error that the attacker can incur is $\lambda_{2}c$. 
 \end{theorem}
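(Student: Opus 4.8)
The plan is to show that \emph{no} scheduling policy can force the attacker to do worse than the trivial strategy of ignoring the side channel entirely and estimating each $X_k$ by its prior mean $\lambda_{2}c$; the bound $\lambda_{2}c$ then drops out of two elementary facts. First I would record that, for any fixed scheduling policy and any fixed attack strategy $(n,\arrtimes,\sizes)$, the estimator of $X_k$ that minimizes the mean squared error among all measurable functions of the observation triple $(\arrtimes,\deptimes,\sizes)$ is the conditional expectation $\mathbf{E}[X_k\mid \arrtimes,\deptimes,\sizes]$, with minimum value $\mathbf{E}[\mathrm{Var}(X_k\mid \arrtimes,\deptimes,\sizes)]$ --- which is exactly the $k$-th summand appearing in the definition of $\epolicy$ in \eqref{eq:privacy}. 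Thus the attacker's optimal error, for \emph{any} policy, is a sum of such conditional variances (up to the time-normalization implicit in the metric).

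Second I would apply the law of total variance to each term: for the random variable $X_k$ (which has finite second moment) and the information generated by Bob's observations,
\[
\mathbf{E}\!\left[\mathrm{Var}\!\left(X_k \mid \arrtimes,\deptimes,\sizes\right)\right] \;\le\; \mathrm{Var}(X_k).
\]
Because Alice's arrivals form a Poisson process of rate $\lambda_{2}$ and unit job size, the increment $X_k=\aaone(kc)-\aaone((k-1)c)$ counts the points of that process in a window of length $c$, hence is $\mathrm{Poisson}(\lambda_{2}c)$ and $\mathrm{Var}(X_k)=\lambda_{2}c$. Combining the two displays shows every summand is at most $\lambda_{2}c$, for every policy and every attack strategy; in particular the error-minimizing attacker also incurs at most $\lambda_{2}c$. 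For tightness, an attacker who issues no jobs (or, more generally, one whose observations happen to be statistically independent of Alice's process --- the situation TDMA will be shown to create) has conditional variance equal to $\lambda_{2}c$, so the bound is achieved and $\lambda_{2}c$ is precisely the maximal error a rational attacker can incur.

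The proof is short, and I expect the only delicate point --- rather than a genuine obstacle --- to be the following. Bob is allowed to choose his probe instants $\arrtimes$ and sizes $\sizes$ \emph{adaptively}, as functions of the departure times of his earlier jobs, which themselves depend on Alice's arrivals; one might worry that cleverly correlating the probes with Alice's traffic could be leveraged to beat the prior variance. The load-bearing observation is that the law of total variance requires no independence between $X_k$ and the conditioning variables: conditioning on \emph{any} information can only reduce expected conditional variance. Consequently neither adaptivity on the attacker's side nor any design choice by the scheduler can push the per-clock-period estimation error above the prior variance $\lambda_{2}c$, and care in the write-up amounts only to stating the MMSE-optimality of the conditional expectation and the total-variance inequality cleanly.
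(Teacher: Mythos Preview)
Your proposal is correct and follows essentially the same approach as the paper. The paper argues that Bob can always ignore his observations and estimate each $X_k$ by its prior mean $\lambda_{2}c$, incurring the Poisson variance $\lambda_{2}c$; your use of the law of total variance is simply the formal name for the inequality underlying that argument (MMSE $\le$ MSE of the constant estimator $=\mathrm{Var}(X_k)$), so the two proofs differ only in phrasing.
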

 \begin{proof}
 When Alice's traffic is a Poisson $\lambda_{2}$ traffic, the number of arrivals in between two clock ticks, $X_{k}$, is a Poisson random variable with parameter $\lambda_{2}c$. Also $X_{k}$s are independent and identically distributed (i.i.d.). Ignoring all the observations available to him, \emph{viz.}~$\{t_{1}^{n},t^{'n}_{1},s_{1}^{n}\}$, if Bob estimates $X_{k}$ with its statistical mean, $\lambda_{2}c$, for all $k$, the mean estimation error incurred by him is equal to the variance of the Poisson random variable, which is $\lambda_{2}c$. 
  \end{proof}
 Define 
 \begin{eqnarray} % requires amsmath; align* for no eq. number
   \mathcal{E}_{\mathrm{Max}}^{c,\lambda_{2}} &\doteq& \lambda_{2}c.  \label{eq:unifub} \\
   &\geq& \mathcal{E}_{\mathrm{Scheduling\ Policy}}^{c,\lambda_{2}} \label{eq:unifbound}
 \end{eqnarray}
 Making a clever use of his observations, the attacker can potentially perform a better job at guessing Alice's arrival process, thereby incurring a lower estimation error. A good scheduling policy should have an estimation error as close $\mathcal{E}_{\mathrm{Max}}^{c,\lambda_{2}}$ as possible. 
 
 \subsection{Time Division Multiple Access (TDMA) policy achieves the maximum privacy}\label{sec:tdma}
 If the scheduler uses a Time-Division-Multiple-Access (TDMA) policy, it allocates dedicated time slots to process jobs from each user. For instance, when there are only two users, the scheduler could assign the odd time slots to serve jobs from the first user, and even time slots to process jobs from the other. If there are no unserved jobs from one of the users, the scheduler just idles in the corresponding time slots. 
 
 For such a policy, the arrival pattern of one user does not influence the completion times of jobs in the other. In this scenario, the attacker can do nothing better than use his knowledge of the Alice's arrival statistics to guess the arrival pattern. Therefore, the estimation error incurred would be $\etdma=\lambda_{2}c$. This means that from an privacy perspective, TDMA is an optimal albeit impractical scheduling policy. 
 
The high privacy offered by TDMA comes at a significant performance cost in terms of throughput and delay. First, note that the policy is not throughput optimal \cite{Rom&Sidi:Book, Gallager:book}. By throughput optimality, we mean the following. Suppose there are $m$ users using the scheduler to process their jobs, and user $i$ issues unit sized jobs at a rate $\lambda_{i}$, and the scheduler can process 1 unit of job in a unit interval of time. It can be shown that the system stays stable, i.e., the queue sizes do not blow to infinity, for any values of the input rates as long as they satisfy $\sum\limits_{i=1}^{m}\lambda_{i}<1$ if the scheduler uses the FCFS policy. However, if the scheduler used TDMA, then for the system to be stable, we would require $\lambda_{i}<1/m$, which significantly shrinks the \emph{rate region} over which the system stays stable. Second, even if the arrival rates from users are such that the system stays stable, as user's job patterns are rarely periodic, they would therefore incur severe delays. The delay issues are further discussed in Section \ref{sec:delay}.

 \subsection{FCFS offers the least privacy}\label{sec:fcfs}
 The FCFS scheduling policy serves jobs in the order in which they  arrive. Although this policy is easy to implement, and fares well in the metrics of throughput and mean delay, as we will demonstrate, it does not perform well in terms of preserving the privacy of  a user. For the system model considered here, the following theorem can be proved. It states that the estimation error incurred by the attacker when FCFS scheduling policy is used is equal to zero when the attacker issues his jobs at a high enough rate. 
 
 \begin{theorem}\label{thm:fcfszeroerr}
 FCFS policy offers no privacy to its users. Specifically, for a fixed arrival rate from Alice, $\lambda_{2}$, the estimation error incurred by the strongest attacker is equal to zero.  That is,
 \begin{eqnarray}
 \efcfs =0.
 \end{eqnarray}
 \end{theorem}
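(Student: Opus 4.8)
The plan is to show $\efcfs=0$ by exhibiting, for every (large) $N$, a single admissible attack under which Bob reconstructs $X_1,\dots,X_N$ exactly, almost surely. The privacy metric is the $N\to\infty$ limit of a minimum over attack strategies of $\sum_{k=1}^{N}\mathbf{E}\big[(X_k-\mathbf{E}[X_k\mid \arrtimes,\deptimes,\sizes])^2\big]$, a sum of non-negative terms; so producing one feasible attack for which each term vanishes forces the minimum to be $0$, and hence the limit to be $0$.

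\emph{The attack.} Fix any $\delta\in(0,1-\lambda_{2})$ and set $\Delta\doteq c/M$, where $M$ is a positive integer large enough that $\Delta<1$. Bob issues jobs of size $s\doteq\Delta(1-\lambda_{2}-\delta)$ at the times $0,\Delta,2\Delta,\dots,Nc$ (that is, at the left endpoint of each of the $NM$ slots $((j-1)\Delta,j\Delta]$, plus one closing job at $Nc$). His issuing rate is $\frac{NM+1}{NM}(1-\lambda_{2}-\delta)$, which is strictly below $1-\lambda_{2}$ for all sufficiently large $N$, so the rate constraint in the privacy metric holds; since we take $N\to\infty$, small $N$ are irrelevant.

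\emph{Reading the workload off the delays, then inverting the queue.} Under FCFS the $i$-th job of Bob, arriving at $t_i$, is served behind exactly the unfinished work present at $t_i^-$, so its measured delay satisfies $t_i'-t_i-s_i=W(t_i^-)$, where $W(\cdot)$ is the workload (virtual-waiting-time) process; thus from $\{\arrtimes,\deptimes,\sizes\}$ Bob knows $W$ at every slot boundary. Write $w_j\doteq W((j-1)\Delta^-)$ for the workload seen by his $j$-th job, $A_j$ for the number of Alice arrivals in slot $j$, and $B_j$ for the time the server is busy during slot $j$. Since Bob's job of size $s<\Delta$ keeps the server busy for at least its first $s$ time units in the slot, $B_j\in[s,\Delta]$, and workload conservation over the slot gives $w_{j+1}-w_j=s+A_j-B_j$, i.e.\ $A_j=\widehat A_j-I_j$ with $\widehat A_j\doteq w_{j+1}-w_j-s+\Delta$ computable by Bob and $I_j\doteq\Delta-B_j\in[0,\Delta-s]$ the slot's idle time. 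Because $A_j$ is a non-negative integer and $0\le I_j<\Delta<1$, we conclude $A_j=\lfloor\widehat A_j\rfloor$: Bob recovers $A_j$ exactly, hence $X_k$ exactly (as the sum of $A_j$ over the $M$ slots tiling the $k$-th clock period). Therefore $\mathbf{E}[X_k\mid \arrtimes,\deptimes,\sizes]=X_k$ a.s.\ and every summand in the privacy metric is $0$; the only exceptional events, an Alice arrival falling exactly on a slot boundary, have probability zero and do not affect the expectation. Letting $N\to\infty$ yields $\efcfs=0$.

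\emph{Where the difficulty sits.} The genuine obstacle is the stability constraint $\sum_i s_i/(Nc)<1-\lambda_{2}$: it prevents Bob from keeping the queue permanently backlogged, so he cannot simply read Alice's arrival epochs off a continuously observed workload trajectory, and any slot in which the queue drains injects an unknown idle time $I_j$ into the per-slot count. The single idea that closes the gap is to probe faster than Alice's job size, i.e.\ $\Delta<1$; since Alice's jobs are unit-sized, this bounds the per-slot idle time strictly below one, which is exactly what makes the integer rounding $A_j=\lfloor\widehat A_j\rfloor$ exact rather than merely approximate.
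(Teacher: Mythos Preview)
Your proof is correct and rests on the same idea as the paper's: Bob probes at a spacing $\Delta<1$, strictly smaller than Alice's unit job size, so that the idle time accrued in any inter-probe slot is bounded below one, and the integrality of Alice's per-slot arrival count lets a rounding recover it exactly. The only difference is presentational: the paper proceeds by a three-way case analysis on whether $t'_{k-1}\lessgtr t_k$ and whether $\tilde X_k=0$, deriving formulas such as $\tilde X_k=\lceil t'_k-(t_k+s_k)\rceil$ and $\tilde X_k=t'_k-s_k-t'_{k-1}$ in the respective cases, whereas you fold all cases into a single workload-conservation identity $w_{j+1}-w_j=s+A_j-B_j$ and the uniform formula $A_j=\lfloor \widehat A_j\rfloor$; this is arguably cleaner and also sidesteps the paper's implicit assumption that $c/\lceil c\rceil<1$ (which fails when $c$ is an integer) by choosing $M$ large enough.
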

 
\begin{IEEEproof}
We prove this theorem by specifying one particular attack strategy, i.e., one set of arrival times and sizes of the jobs from the attacker which guarantees zero error. The attacker issues one job every $c/\lceil c \rceil$ time units, where $\lceil c \rceil$ is the smallest integer greater than or equal to $c$. The size of each job is $\lambda c/\lceil c \rceil $, so that the rate at which he issues jobs is equal to $\lambda$. Therefore $t_k=kc/\lceil c \rceil$ and $s_k=\lambda c/\lceil c \rceil $. Let $t_{1}^{'n}$ be the departure times of his jobs, and $\tilde{X}_k$ be the number of jobs issued by Alice in between times $(k-1) c/\lceil c \rceil $ and $k c/\lceil c \rceil$. Recall that ${X}_k$ is the number of jobs issued by Alice in between times $(k-1)c$ and $kc$. Therefore, $X_k=\sum\limits_{i=1}^{\lceil c \rceil}\tilde{X}_{(k-1)\lceil c \rceil + i }$. This means if the attacker estimates the sequence $\{\tilde{X}_k\}$ accurately, he can estimate the sequence $\{X_k\}$ accurately as well. Note that the time between successive arrivals from the attacker is $ c/\lceil c \rceil < 1 $. The size of a job from Alice is 1. Hence the attacker can accurately learn whether Alice issued a job between two of his jobs or not.  His estimation procedure is the follows. 

Consider first the scenario when $t_{k-1}^{'}<t_{k}$, that is, his $k-1^{th}$ job departs before the arrival of $k^{th}$ job. In this case, suppose $\tilde{X}_k=0$, then $t_{k}^{'}=t_{k}+s_k$, i.e., the $k^{th}$ job goes into service immediately upon its arrival. This scenario is shown in Figure \ref{fig:earlydepboth}. On the other hand, if $\tilde{X}_k>0$, shown in Figure \ref{fig:earlydepfirst} then $t_k+s_k+\tilde{X}_k-1<t_{k}^{'}<t_k+s_k+\tilde{X}_k$, which implies $\tilde{X}_k=\lceil t_{k}^{'}-(t_k+s_k) \rceil $. 

Now, suppose $t_{k-1}^{'}>t_{k}$, i.e., if the $k-1^{th}$ job from the attacker departs after the arrival of his $k^{th}$ job, shown in Figure \ref{fig:latedepboth}. In this case, $\tilde{X}_k=t_{k}^{'}-s_k-t_{k-1}^{'}$.

Clearly, $\tilde{X}_{k}$ is a deterministic function of $t_{k-1},t_{k-1}^{'},t_{k},t_{k}^{'}$ and $s_k$. Therefore, the attacker incurs zero error in estimating Alice's arrival pattern.
\end{IEEEproof}

\begin{figure}[t]
\centering
\resizebox{0.35\columnwidth}{!}{
\begin{overpic}[width=0.2\textwidth]{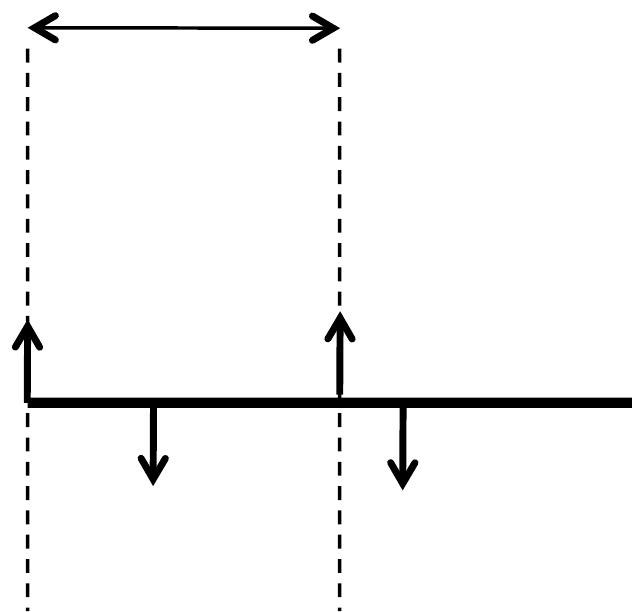}
\put(-7,-4){ $\frac{(k-1)c}{\lceil c \rceil}$}
\put(50,-4){$\frac{kc}{\lceil c \rceil}$}
\put(18,38){$t^{'}_{k-1}$}
\put(60,38){$t^{'}_{k}$}
\put(13,96){ $\tilde{X}_{k}=0$}
\end{overpic}
}
\caption{This is the scenario when $t_{k-1}^{'}<kc/\lceil c \rceil$ and $\tilde{X}_{k}=0$. In this scenario, $t_{k}^{'}=t_{k}+s_{k}$.  In the figure, the solid upward and downward pointing arrows denote the arrival and departure times, respectively, of attacker's jobs.}
\label{fig:earlydepboth}
\end{figure}

\begin{figure}[tbp]
\centering
\resizebox{0.35\columnwidth}{!}{
\begin{overpic}[width=0.23\textwidth]{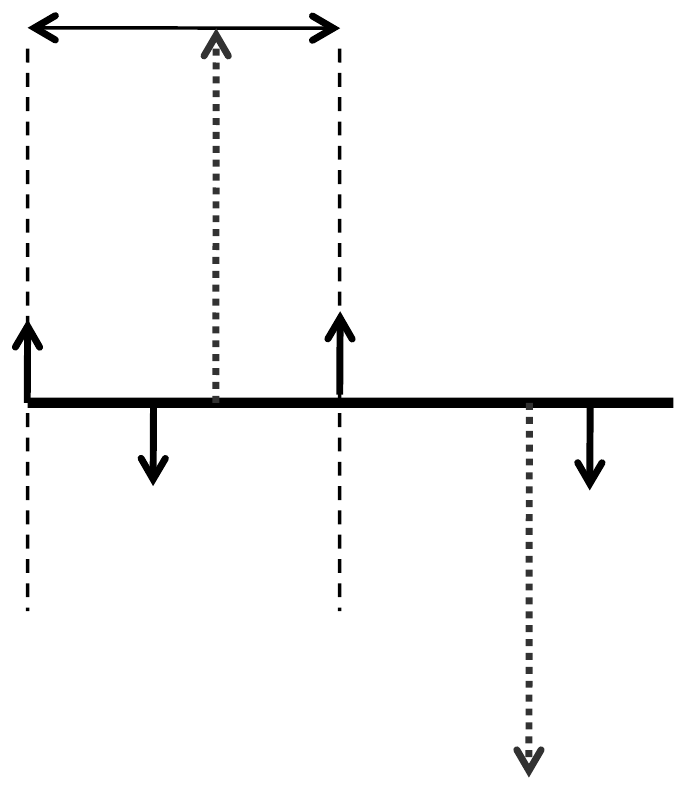}
\put(-7,14){ $\frac{(k-1)c}{\lceil c \rceil}$}
\put(38,16){$\frac{kc}{\lceil c \rceil}$}
\put(14,55){$t^{'}_{k-1}$}
\put(70,55){$t^{'}_{k}$}
\put(13,100){ $\tilde{X}_{k}=1$}
\end{overpic}
}
\caption{This is the scenario when $t_{k-1}^{'}<kc/\lceil c \rceil$ and $\tilde{X}_{k}=1>0$. In this scenario, $\tilde{X}_{k}=\lceil t_{k}^{'}-(t_{k}+s_{k}) \rceil$.  In the figure, the solid upward and downward pointing arrows denote the arrival and departure times, respectively, of attacker's jobs, and the dotted arrows represent the jobs from Alice. The size of the arrow is proportional to the size of the job. }
\label{fig:earlydepfirst}
\end{figure}

\begin{figure}[tp]
\centering
\resizebox{0.6\columnwidth}{!}{
\begin{overpic}[width=0.35\textwidth]{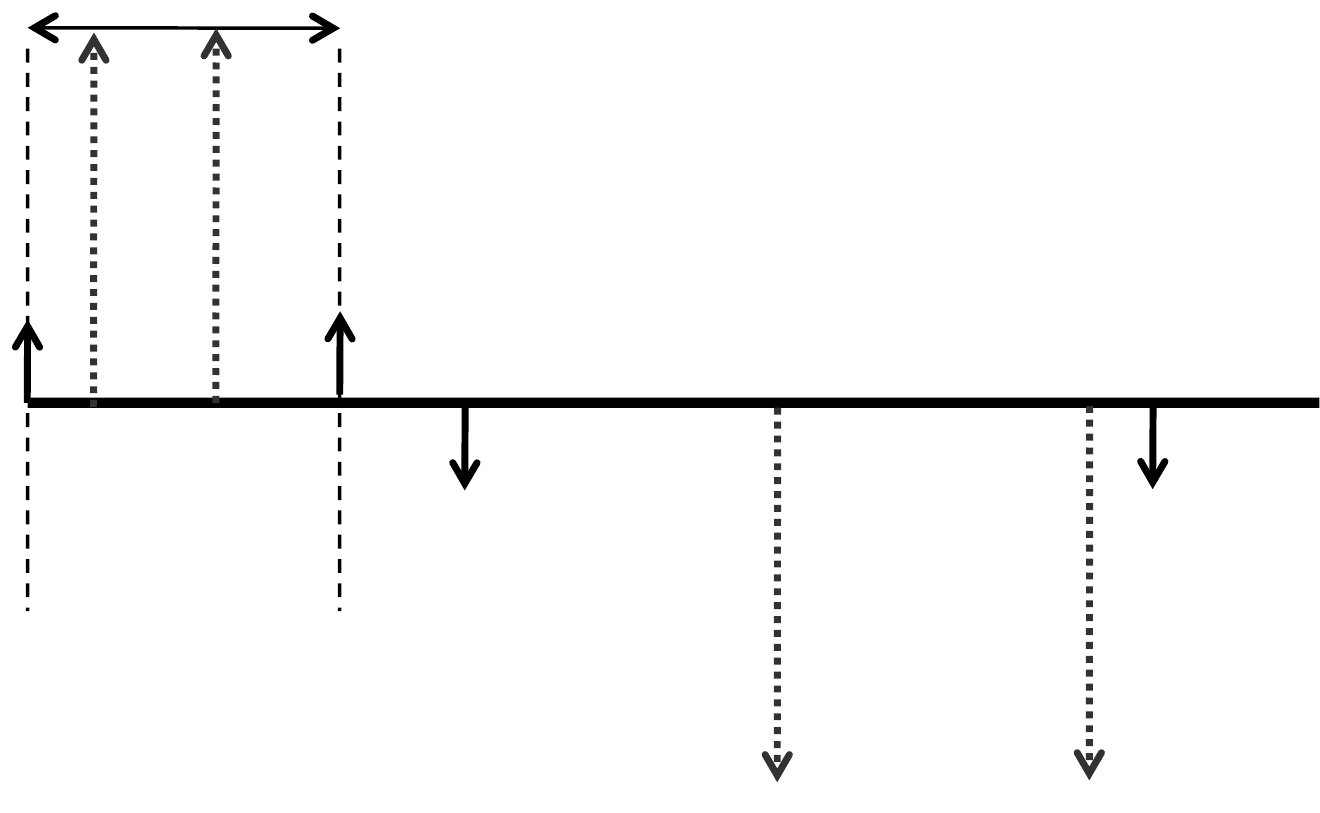}
\put(-7,10){ $\frac{(k-1)c}{\lceil c \rceil}$}
\put(22,10){$\frac{kc}{\lceil c \rceil}$}
\put(30,35){$t^{'}_{k-1}$}
\put(85,35){$t^{'}_{k}$}
\put(8,63){ $\tilde{X}_{k}=2$}
\end{overpic}
}
\caption{This is the scenario when $t_{k-1}^{'}>kc/\lceil c \rceil$. In this scenario, $\tilde{X}_{k}=t_{k}^{'}-t^{'}_{k}-s_{k}$.  In the figure, the solid upward and downward pointing arrows denote the arrival and departure times, respectively, of attacker's jobs, and the dotted arrows represent the jobs from Alice. The size of the arrow is proportional to the size of the job. }
\label{fig:latedepboth}
\end{figure}

 Next we present some remarks about the result derived above. For the attack strategy specified in Theorem \ref{thm:fcfszeroerr}, the attacker incurs 0 error independent of his rate. This means that the attacker can estimate Alice's arrival pattern exactly even when he is permitted to issue jobs at an arbitrarily small rate. This is the case because Alice's jobs are of size 1 and as long as the attacker issues jobs at a frequency greater than 1, he can estimate Alice's arrivals exactly. Note that the size of jobs issued by him can be arbitrarily small. 
 
 As an aside, the result of the theorem can easily be extended to the case when the job sizes of Alice are of arbitrary sizes (for example, when the job sizes from Alice are exponentially distributed). The attack strategy in such a scenario would be the follows. The attacker issues one job every $\delta$ units of time, and the size of each job is $\lambda \delta$. His estimate of Alice's arrivals is the follows. If $t_{k-1}^{'}<t_{k}$ and $t_{k}^{'}=t_{k}+s_{k}$, then he estimates the number of arrivals from Alice between times $t_{k-1}$ and $t_{k}$ to be zero.  If $t_{k-1}^{'}<t_{k}$ and $t_{k}^{'}>t_{k}+s_{k}$, then he estimates the number of arrivals from Alice between times $t_{k-1}$ and $t_{k}$ to be $t_{k}^{'}-(t_{k}+s_{k})$. And finally if $t_{k-1}^{'}>t_{k}$, then he estimates the number of arrivals to be $t_{k}^{'}-s_{k}-t_{k-1}^{'}$. Like before, his estimation is accurate in the third scenario. However, in the first two scenarios, he could incur some error. Nonetheless, the error is bounded by $\delta^{2}$. Therefore, by choosing $\delta$ small enough, the attacker can incur close to zero estimation error. 
 
 This suggests two possible alterations to the FCFS policy which could lead to a higher estimation error for the attacker. First, the policy could limit the frequency at which a user can issue jobs to the system, by imposing a minimum inter-arrival time. This could be done by having a token-bucket filter \cite{tokenbucketfilter}. Second, the scheduler could impose a restriction on the smallest size of a job, perhaps by idling for a short while after serving a small job. While these mechanisms look promising, it is easy to see that they hurt the performance of the system in terms of a smaller throughput region (if the scheduler starts idling) and/or adding to the overall delay. As we shall show in the following section, one can design secure scheduling policies without sacrificing any of the throughput region.

To sum up, FCFS offers very little privacy to its users. On the other hand, TDMA offers the highest levels of privacy, albeit at a high performance penalty (in terms of delay and throughput). In the following section, we propose a parametric scheduling policy which provides a controlled trade-off between delay and privacy.  

\section{Provably secure scheduling policies}\label{sec:securepolicies}

In this section, we derive two policies that are resistant to any attacks that the attacker might possibly launch. The first policy, \emph{accumulate and serve} pre-distorts the arriving traffic before serving them. This erases the fine grained timing information that the attacker can learn. The second policy takes a different approach. TDMA leaks no information of one user's arrivals to the other because slots are statically reserved ahead of time. The second policy, \emph{proportional TDMA}, uses this same principle to guarantee minimal information leakage, however, it is designed so that these reservations alter adaptively at a slow rate so as to minimize delays incurred by the users. We discuss these two policies in the following.

\subsection{Accumulate and Serve Policy}\label{sec:accserve}
The Accumulate and Serve policy is shown in Figure \ref{fig:accpolicy}. In this policy, the scheduler accumulates all the incoming jobs in its buffer for a period of $T$ time units. It then serves all the accumulated jobs from user 1, followed by all the accumulated jobs from user 2, followed by those from user 3, and so on. In the two user scenario discussed before, user 1 could be the attacker, or Alice. If serving all these jobs takes $M$ units of time and if $M<T$, then the scheduler idles for the remaining $T-M$ time before beginning to serve the accumulated jobs. The intuition for such a policy stems from the fact that buffering destroys the timing information of the arrival times of the jobs, thereby reducing the amount of information that can be extracted by the attacker. The attacker can therefore learn, at most, the total number of jobs that arrived from Alice in the accumulate period, and nothing more fine grained. Our subsequent analysis will make use of this fact to show that this policy provides guaranteed levels of privacy to Alice, irrespective of any attack that can be launched by Bob.  Using Foster-Lyapunov stability theorem, it can be shown that the scheduling policy is stable when the sum of rates from all the users is less than 1 (refer Appendix \ref{prf:foster}).

\begin{figure}[t] %  figure placement: here, top, bottom, or page
   \centering
   \includegraphics[width=0.7\columnwidth]{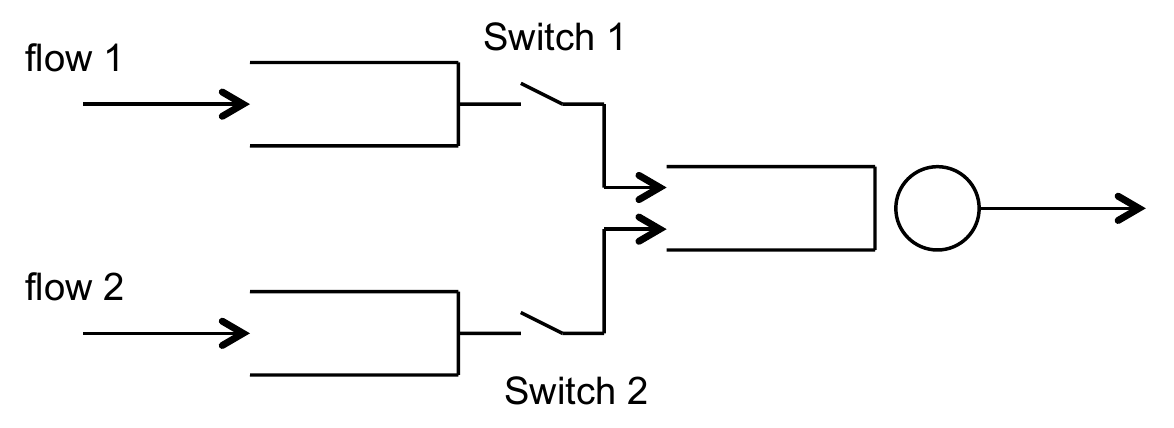} 
   \caption{Pictorial representation of the accumulate and serve policy. The switches close every $T$ time units. All the accumulated jobs from user 1 are served followed by jobs from user 2. The two flows correspond to jobs issued by the two users.}
   \label{fig:accpolicy}
\end{figure}

\subsection{Analysis}

\begin{theorem}\label{thm:accmain} \label{THM:ACCMAIN}
A lower bound on the privacy offered by the accumulate and serve policy is given by, 
\begin{eqnarray}
\mathcal{E}_{\mathrm{Acc Serve}}^{c,\lambda_{2}} & \geq & \lambda_{2}c\left(1-\frac{c}{T}\right)_{+} \nonumber \\
&=& \emax \left(1-\frac{c}{T}\right)_{+}, \label{eq:thmeacc}
\end{eqnarray} 
\end{theorem}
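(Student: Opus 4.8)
The plan is to exploit the feature underlying this policy's privacy: every job arriving during an accumulation window of length $T$ is withheld until that window closes and is then served back-to-back, so the buffering destroys all information about \emph{when} within a window a job arrived, retaining only the \emph{number} of arrivals in that window. For $j\ge 1$ put $Y_j \doteq \aaone(jT)-\aaone((j-1)T)$, the number of Alice's (unit-sized) jobs arriving in the $j$-th window. The first and central step is to show that, for \emph{every} attack strategy, the attacker's entire observation $\{\arrtimes,\deptimes,\sizes\}$ is a deterministic function of his own inputs together with the sequence $(Y_1,Y_2,\dots)$, and of nothing else about Alice's arrival process. I would prove this by induction on the window index: by the definition of the policy, the work served during window $j$ and the backlog carried out of it are determined by the total work accumulated over windows $1,\dots,j$; Alice contributes to this total only through $Y_1,\dots,Y_j$ (her jobs are unit sized), while the attacker's own contribution is known to him. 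Hence the entire server trajectory, and in particular the departure epochs $\deptimes$ of his jobs, is a deterministic function of $(Y_1,Y_2,\dots)$ and of $\{\arrtimes,\sizes\}$ (the latter being, even under adaptive probing, themselves functions of the same $Y_j$'s). Thus $\sigma(\arrtimes,\deptimes,\sizes)\subseteq\sigma(Y_1,Y_2,\dots)$.

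Because conditioning on a coarser $\sigma$-algebra cannot decrease the mean-squared error,
\begin{equation*}
\mathbf{E}\!\left[\left(X_k-\mathbf{E}[X_k\mid\arrtimes,\deptimes,\sizes]\right)^2\right]\geq\mathbf{E}\!\left[\mathrm{Var}(X_k\mid Y_1,Y_2,\dots)\right]
\end{equation*}
for every $k$ and every attack strategy, and since the right-hand side does not depend on the attack it lower-bounds the minimization in \eqref{eq:privacy}. Assume $T\ge c$ (otherwise the claimed bound equals $0$ and there is nothing to prove). When clock period $k$ lies inside a single window $j$, independent increments of the Poisson process make all other $Y_i$ irrelevant and $Y_j=X_k+R$ with $X_k\sim\mathrm{Poisson}(\lambda_2 c)$ and $R\sim\mathrm{Poisson}(\lambda_2(T-c))$ independent; Poisson splitting then gives $X_k\mid Y_j\sim\mathrm{Binomial}(Y_j,c/T)$, whence
\begin{equation*}
\mathbf{E}\!\left[\mathrm{Var}(X_k\mid Y_j)\right]=\mathbf{E}[Y_j]\,\frac{c}{T}\left(1-\frac{c}{T}\right)=\lambda_2 c\left(1-\frac{c}{T}\right)=\emax \left(1-\frac{c}{T}\right)_{+}.
\end{equation*}
When period $k$ straddles two consecutive windows, the same computation applied to the split $X_k=X'+X''$ along those windows, with overlap lengths $c_1+c_2=c$, gives $\mathbf{E}[\mathrm{Var}(X_k\mid Y',Y'')]=\lambda_2 c-\frac{\lambda_2}{T}(c_1^2+c_2^2)\ge\lambda_2 c(1-c/T)$ because $c_1^2+c_2^2\le(c_1+c_2)^2$. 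In either case the estimation error in each clock period is at least $\emax(1-c/T)_+$, which yields \eqref{eq:thmeacc}.

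The main obstacle is the inductive first step: one has to verify with care that the departure times themselves --- not merely the coarse number of jobs the server has finished --- carry no more information about Alice than $(Y_j)$, i.e.\ that the deterministic queue recursion, \emph{including the inter-window carry-over that occurs whenever the work accumulated in a window exceeds $T$}, depends on Alice's arrivals only through the per-window counts. That is a bookkeeping argument about the policy's dynamics; once it is in hand, the MMSE data-processing inequality and the Poisson splitting identity make the remaining steps routine.
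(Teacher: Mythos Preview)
Your proposal is correct and follows essentially the same route as the paper: lower-bound the attacker's error by giving him the genie side information $Y_j=B_j$ (per-window counts), argue that the departures are deterministic in $(Y_j)$ and the attacker's own inputs so nothing is lost, and then reduce to a Poisson-splitting/binomial variance computation. The only noteworthy difference is that the paper first imposes alignment conditions \textbf{C1} and \textbf{C2} to force every clock period inside a single window, proves the bound there, and then asserts without details that the non-aligned case can only increase the error; your explicit two-window calculation $\lambda_2 c - \tfrac{\lambda_2}{T}(c_1^2+c_2^2)\ge \lambda_2 c(1-c/T)$ handles that case directly and is a cleaner way to dispatch it.
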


\begin{proof} 
A detailed proof is given in Appendix \ref{app:proofaccmain}. 
 \end{proof}

Some remarks about the consequences of Theorem \ref{thm:accmain}:
\begin{itemize}
\item The accumulate and serve policy guarantees a certain level of privacy regardless of the attacker strategy. Furthermore, if the accumulate period is chosen to be large enough so that $\frac{c}{T} \ll 1$, then the estimation error incurred by the attacker gets increasingly close to the maximum estimation error, which is what he would have incurred by just using the statistics of Alice's arrival. The price paid for choosing a very large value for $T$ is of course the delay experienced by jobs. Delay analysis is carried out in Section \ref{sec:delay}. When $T$ is chosen to be 5 times $c$ or greater, the ratio is greater than 0.8, irrespective of the attacker's strategy. 
\item It is important to note that the error does not depend on the strategy employed by the attacker. In particular, it does not depend on the rate at which the attacker issues his jobs, which seems unusual. This is a consequence of the result in Lemma \ref{lem:observationsuseless}, which says that when the attacker is given minimum additional information (by a genie) about total number of jobs in an accumulate period, his arrival and departure times carry no additional useful information for the purpose of estimation. The rate at which attacker issues his jobs, however, does determine the gap between true estimation error and the genie aided error.  
\item The bound $\mathcal{E}_{\mathrm{Max}}^{c}\left(1-\frac{c}{T}\right)_{+}$ evaluates to 0 when $c \geq T$, which is not very useful. However, we are mostly interested in the scenario when $T$ is chosen to be large enough to guarantee a degree of protection against a given value of $c$.   
\end{itemize}

\subsection{Proportional TDMA (p-TDMA) Policy}
As stated before, TDMA achieves the highest privacy because the times at which the attacker gets served is known to him a priori, he learns nothing by performing the attack. The p-TDMA policy builds upon this idea. In the two user scenario, the policy divides time into slots, and at time 0, the policy statically assigns all odd numbered slots to user 1 and even numbered slots to user 2, just like TDMA. At the end of an \emph{adaptation period} $L$, the policy computes the empirical rates at which the two users have been issuing jobs to the scheduler, $\hat{\lambda}_{1}$ and $\hat{\lambda}_{2}$. Between times $L$ and $2L$, each time slot is reserved for user 1 with probability $\frac{\hat{\lambda}_{1}}{\hat{\lambda}_{1}+\hat{\lambda}_{2}}$, and user 2 with probability $\frac{\hat{\lambda}_{2}}{\hat{\lambda}_{1}+\hat{\lambda}_{2}}$. At time $2L$, the scheduler recomputes the empirical rates, and uses the new rates to decide whether to reserve a slot to serve user 1 or user 2. The policy continues to update the empirical rates every $L$ time units. The idea is that the policy eventually learns the true rates at which the users issue jobs, and allocates times in proportion to these rates, thus reducing the delays incurred.

\begin{theorem}
\begin{eqnarray}
\eptdma \geq \emax \left( 1-\frac{c}{L} \right)_{+}.
\end{eqnarray}   
\end{theorem}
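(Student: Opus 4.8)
The plan is to mirror the proof of Theorem~\ref{thm:accmain}, with the adaptation period $L$ in the role played there by the accumulate period $T$. Two ingredients are needed. First, an analogue of Lemma~\ref{lem:observationsuseless}: if a genie reveals to Bob, at no cost, the number of jobs $\aaone(mL)-\aaone((m-1)L)$ that Alice issued during each adaptation period, then Bob's actual observations $\{t_{1}^{n},t^{'n}_{1},s_{1}^{n}\}$ become worthless, i.e.\ they carry no further information about the fine pattern $\{X_{k}\}$. Second, conditioned only on those per-period aggregate counts, one computes the minimum mean-square error of estimating the count $X_{k}$ over one length-$c$ clock period, and shows it equals $\lambda_{2}c\,(1-c/L)$.

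For the first ingredient I would argue as follows. (i) The empirical rate $\hat{\lambda}_{2}$ used by the policy at the start of adaptation period $m$ is, by construction, a deterministic function of $\aaone$ evaluated at integer multiples of $L$ (equivalently, of the per-period increments), whether the estimator is cumulative or windowed. (ii) The probabilistic slot reservations made during period $m$ depend on Alice only through $\hat{\lambda}_{2}$ together with independent coin flips. (iii) Since, exactly as in TDMA (\S\ref{sec:tdma}), each user is served only in its own reserved slots and a reserved slot whose owner has no pending job is idled, Bob's departure times during period $m$ are determined entirely by his own $t_{1}^{n}$, $s_{1}^{n}$ and the realized reservation pattern, never by Alice's \emph{within-period} arrival epochs. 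Chaining (i)--(iii), conditioned on the per-period aggregate counts the triple $\{t_{1}^{n},t^{'n}_{1},s_{1}^{n}\}$ is conditionally independent of $\{X_{k}\}$; hence $\mathbf{E}[X_{k}\mid t_{1}^{n},t^{'n}_{1},s_{1}^{n},\mathrm{genie}]=\mathbf{E}[X_{k}\mid\mathrm{genie}]$, and discarding the genie can only increase the error. This lower-bounds $\eptdma$ by the genie-aided error.

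For the second ingredient, recall that $\{X_{k}\}$ are i.i.d.\ Poisson$(\lambda_{2}c)$. Taking, for simplicity and without loss for a lower bound, $c$ to divide $L$, each adaptation period contains exactly $J=L/c$ whole clock periods, and conditioning on their sum $S$ makes their joint law symmetric multinomial, so $X_{k}\mid S\sim\mathrm{Binomial}(S,1/J)$. Then $\mathbf{E}[\mathrm{Var}(X_{k}\mid S)]=\mathbf{E}[S]\cdot\tfrac{1}{J}(1-\tfrac{1}{J})=\lambda_{2}c\,(1-1/J)=\lambda_{2}c\,(1-c/L)=\emax(1-c/L)$. When $L<c$ the claimed bound is $0$ and holds trivially, which accounts for the $(\cdot)_{+}$. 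Combining the two ingredients yields $\eptdma\geq\emax(1-c/L)_{+}$.

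The main obstacle is making the conditional-independence claim in (i)--(iii) airtight: one must check that the \emph{observed} sequence of which slots Bob actually receives cannot be used to pin down $\hat{\lambda}_{2}$, hence Alice's aggregate counts, more sharply than the genie already does. This is fine because $\hat{\lambda}_{2}$ is itself a deterministic function of the aggregates, so even exact knowledge of every $\hat{\lambda}_{2}$ is already implied by the genie; all that is left over is the independent coin randomness. A secondary technicality is the alignment of clock periods with adaptation periods when $c\nmid L$, which is handled as in the proof of Theorem~\ref{thm:accmain}: supplying Bob with slightly coarser genie information only moves the bound in the safe direction. Apart from these points the argument reduces to the same binomial-variance computation already used for the accumulate-and-serve policy.
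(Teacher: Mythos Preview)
Your proposal is correct and follows essentially the same approach as the paper: give the attacker genie side information consisting of Alice's per-adaptation-period counts, argue that because the policy's behavior depends on Alice only through the empirical rates (hence through those counts), the attacker's observations become conditionally useless, and then invoke the binomial-variance computation from the accumulate-and-serve analysis with $L$ in place of $T$. The paper compresses all of this into three sentences, whereas you have usefully spelled out the conditional-independence step (i)--(iii) and the handling of the slot-reservation randomness; your version is more detailed but not different in substance.
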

\begin{proof}
The analysis is very similar to that of the accumulate and serve policy. The policy depends only on the empirical rates at which jobs are issued to the scheduler. If the attacker is given the side information about how many jobs Alice has issued in each adaptation period, his arrival and departure times carry no further information and can be discarded. The result follows. 
\end{proof}
Some remarks about this policy follow:
\begin{itemize}
\item Unlike the accumulate and serve policy that trades off delay for privacy, the p-TDMA policy trades off adaptation time for privacy. Larger the value of the adaptation period $L$, longer it takes for the policy to learn the true arrival rates. \emph{However, for stationary arrival processes, the long run average delay is independent of the duration of the adaptation period}. The reason for this is the follows. Irrespective of the value of $L$, after sufficient number of adaptation periods, the empirical rates converge to the true rates, and stay that way. Therefore, in the long run, the delay experienced by the jobs is independent of $L$. In theory, by choosing a large enough value of $L$, this policy is guaranteed to perform close to TDMA on the privacy metric without the excessive delay penalty. 
\item In a practical system, if the arrivals are not stationary, i.e., the statistical description of the arrival process changes with time, or if a user issues jobs only for a short duration which is insufficient for the policy to learn the true arrival rate from him, the delays incurred by the user can still be large. One would have to choose the value of $L$ wisely to strike a balance between the performance for short flows, and the privacy it offers.  
\end{itemize}

\section{Delay Analysis}\label{sec:delay}
The delay experienced by the jobs would depend on the total number of users of the system and the specific patterns of arrivals from them. For purposes of delay comparison, we will consider a scenario where all the users of the system have time invariant arrival statistics. This would be the delay incurred when all users are legitimate innocuous users. This is a reasonable assumption, as during the `normal' mode of scheduler operation, when there are only legitimate users using the scheduler, the policy should offer as minimum delay to the jobs as possible. We will consider the case where there are $M$ users of the system, traffic from user $i$ being a Poisson traffic with parameter $\lambda_{i}$, size of all the jobs being equal to 1 time unit. Define $\lambda=\sum_{i=1}^{M}\lambda_{i}$. We will compute the average delay of a job (averaged over all the jobs from all the users) for each of the policies discussed here.

\begin{theorem} \label{thm:delay}\label{THM:DELAY}
The mean delay experienced by a job for FCFS, TDMA, accumulate and serve and p-TDMA scheduling policies are as follows:
\begin{align} 
\mathcal{D}_{\mathrm{FCFS}} =&1+\frac{\lambda}{2(1-\lambda)}, \label{eq:dfcfs} \\
\mathcal{D}_{\textrm{TDMA}} =& 1+ \frac{M}{2} + \sum_{i=1}^{M} \frac{\lambda_{i}}{\lambda} \frac{\lambda_{i}M^{2}}{2(1-\lambda_{i}M)},\label{eq:dtdma} \\
\mathcal{D}_{\mathrm{Acc\ Serve}} \leq& 1 + \frac{\lambda (T+1)}{2} + \frac{\lambda + T(1-\lambda)^{2}}{2(1-\lambda)}\mathbf{I}_{\{\lambda>\lambda^{*}_{T}\}} \nonumber \\
&+\sqrt{\lambda T}\mathbf{I}_{\{\lambda<\lambda^{*}_{T}\}}, \label{eq:daccserve} \\
\mathcal{D}_{\textrm{p-TDMA}} =& 1+\frac{1}{2(1-\lambda)} + \frac{M-1}{1-\lambda}. \label{eq:dptdma} 
\end{align}
where, $\lambda^{*}_{T}=\frac{2T+1-\sqrt{1+4T}}{2T}$.
\end{theorem}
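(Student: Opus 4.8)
The plan is to prove the four formulas \eqref{eq:dfcfs}--\eqref{eq:dptdma} one policy at a time, in each case reducing the shared server to a single (possibly discrete‑time) queue whose stationary mean sojourn time comes from a Pollaczek--Khinchine‑type identity together with Little's law, and then averaging the per‑user delays over all jobs with weights $\lambda_i/\lambda$ (a fraction $\lambda_i/\lambda$ of all jobs belong to user $i$).

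\textbf{FCFS.} The superposition of the $M$ independent Poisson streams is Poisson of rate $\lambda$, so under FCFS the system is exactly an M/D/1 queue with deterministic unit service. The Pollaczek--Khinchine formula gives mean time in queue $\frac{\lambda\,\mathbf{E}[S^2]}{2(1-\lambda)}=\frac{\lambda}{2(1-\lambda)}$, and adding the unit service time yields \eqref{eq:dfcfs}. This is the easy case and needs nothing beyond textbook M/D/1.

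\textbf{TDMA.} Here the $M$ users decouple: user $i$ is served only in the slots $i,i+M,i+2M,\dots$, so its sub‑queue, observed at the epochs of its own slots, is the discrete‑time chain $Q_{n+1}=(Q_n-1)^{+}+A_n$ with $A_n$ i.i.d.\ $\mathrm{Poisson}(\lambda_i M)$ (arrivals over one length‑$M$ period), positive recurrent iff $\lambda_i M<1$ — the implicit stability condition behind the denominators. Squaring the recursion and using stationarity (so that $\mathbf{P}(Q\ge 1)=\lambda_i M$) gives a linear equation for $\mathbf{E}[Q]$, i.e.\ a discrete Pollaczek--Khinchine identity. Passing back to continuous time, a Poisson arrival waits on average $\tfrac{M}{2}$ for the next slot of its class, then one period per job ordered ahead of it, then one unit for its own service; a Little's‑law argument on the continuous‑time sub‑system turns the stationary queue length into the term $\frac{\lambda_i M^2}{2(1-\lambda_i M)}$. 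Averaging $1+\tfrac{M}{2}+\frac{\lambda_i M^2}{2(1-\lambda_i M)}$ over jobs gives \eqref{eq:dtdma}.

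\textbf{Accumulate and serve.} This is the main obstacle. I decompose the delay of a tagged job into (i) the wait from its arrival to the end of its length‑$T$ accumulate window, (ii) the wait inside its window's batch for the jobs served ahead of it (a window contains $\mathrm{Poisson}(\lambda T)$ jobs on average, each costing one unit), and (iii) the spillover backlog inherited from earlier windows whose random workload exceeded the available time $T$; together with the unit service this produces the non‑random part $1+\frac{\lambda(T+1)}{2}$. For (iii) I model the stream of windows as a deterministic arrival process of ``batches'' spaced $T$ apart, the $m$‑th batch carrying $\mathrm{Poisson}(\lambda T)$ units of work, and bound the resulting backlog as a $D/G/1$‑type queue via a Lindley recursion: a Pollaczek--Khinchine bound using the mean and variance of a $\mathrm{Poisson}(\lambda T)$ workload yields $\frac{\lambda+T(1-\lambda)^2}{2(1-\lambda)}$, which is the right (heavy‑traffic) estimate when the load is large, while for small load a direct estimate of how far a single window overshoots $T$ — controlling $\mathbf{P}(\mathrm{Poisson}(\lambda T)>T)$ and the conditional overshoot by a variance/Chernoff bound — gives instead $\sqrt{\lambda T}$. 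The crossover $\lambda^{*}_{T}=\frac{2T+1-\sqrt{1+4T}}{2T}$ is exactly the $\lambda$ at which the per‑window work fluctuation $\sqrt{\lambda T}$ equals the per‑window slack $T(1-\lambda)$, equivalently the root of $T(1-\lambda)^2=\lambda$, which is why the two estimates coincide there and why it appears in the indicators. Because every step over‑estimates (the batch/work model inflates dependencies, and each regime estimate is itself an upper bound), the conclusion is the inequality \eqref{eq:daccserve}; stability throughout is just $\lambda<1$ (cf.\ the Foster--Lyapunov argument already cited for this policy).

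\textbf{p-TDMA.} First I argue the transient is irrelevant to the long‑run average: by the strong law the empirical rates $\hat\lambda_i$ recomputed every $L$ units converge almost surely to $\lambda_i$, and since each adaptation period leaves only a stochastically bounded residual backlog, the Cesàro‑average delay equals that of the stationary policy that assigns each slot to user $i$ independently with probability $\lambda_i/\lambda$. Under that policy the users decouple and user $i$'s sub‑queue is a discrete‑time queue with batch‑$\mathrm{Poisson}(\lambda_i)$ arrivals per slot and i.i.d.\ $\mathrm{Bernoulli}(\lambda_i/\lambda)$ service opportunities — a discrete M/Geo/1 queue of load $\lambda$. A discrete Pollaczek--Khinchine computation plus Little's law (or, alternatively, a coupling with the work‑conserving FCFS system that charges the extra delay caused by the random slot reservations at $\tfrac12+\frac{M-1}{1-\lambda}$ per job) produces \eqref{eq:dptdma}; the accounting is what makes the answer the same for every user.

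\textbf{Where the difficulty lies.} FCFS and the ``bookkeeping'' parts of the others are routine. The real work is (a) the accumulate‑and‑serve bound — cleanly separating the within‑window wait from the across‑window spillover, obtaining a usable Pollaczek--Khinchine bound on a $D/G/1$ backlog, and matching the two regimes so that precisely $\lambda^{*}_{T}$ emerges — and (b) making the p-TDMA reduction to the post‑convergence stationary regime rigorous (a.s.\ convergence of $\hat\lambda_i$ together with boundedness of per‑period residual backlog), plus the slot‑alignment accounting common to TDMA and p-TDMA.
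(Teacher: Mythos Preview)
Your treatment of FCFS, TDMA and p-TDMA matches the paper's: FCFS is the textbook M/D/1 Pollaczek--Khinchine formula; TDMA is handled in the paper by citation (your discrete-time chain $Q_{n+1}=(Q_n-1)^++A_n$ with $A_n\sim\mathrm{Poisson}(\lambda_iM)$ is the standard derivation behind that citation); and for p-TDMA the paper does exactly what you sketch---reduce to the post-convergence stationary policy, set up $Q_{n+1}=(Q_n-D_n)^++A_n$ with $D_n\sim\mathrm{Bernoulli}(\lambda_i/\lambda)$, compute $\mathbf E[Q]$ (the paper via the $z$-transform of the stationary law, you via a discrete PK identity; these are the same computation), and invoke Little's law.

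For accumulate-and-serve your three-part decomposition (wait to window end, within-batch wait via size-biased sampling, inherited backlog) is also the paper's. The difference is in how the inherited-backlog term is bounded. The paper does \emph{not} use two separate arguments for heavy and light traffic. It uses the single Foster--Lyapunov drift inequality for $V(q)=q^2/2$ already proved for stability, but with a free parameter $\alpha\in(\lambda T,\,T]$: this yields $\mathbf E[Q]\le \dfrac{\lambda T+(\alpha-\lambda T)^2}{2(\alpha-\lambda T)}$. Minimizing over $\alpha$ is minimizing $(\lambda T+x^2)/(2x)$ over $x\in(0,T(1-\lambda)]$; the unconstrained minimizer is $x=\sqrt{\lambda T}$, giving the value $\sqrt{\lambda T}$, and it is feasible exactly when $\sqrt{\lambda T}\le T(1-\lambda)$, i.e.\ when $\lambda\le\lambda_T^*$. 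Otherwise one takes the boundary $\alpha=T$ and gets $\dfrac{\lambda+T(1-\lambda)^2}{2(1-\lambda)}$. So both regimes and the threshold $\lambda_T^*$ fall out of one inequality optimized over one parameter.

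This is where your proposal has a gap. Your heavy-traffic estimate via a PK/Kingman bound on the Lindley recursion is fine and gives the right constant. But your light-traffic plan---bound the single-window overshoot $\mathbf P(\mathrm{Poisson}(\lambda T)>T)$ and its conditional size by a variance or Chernoff argument---will not produce $\sqrt{\lambda T}$. Such tail bounds give something exponentially small in $T(1-\lambda)^2/\lambda$ (much smaller than $\sqrt{\lambda T}$), and in any case the stationary backlog is not a single-window overshoot. The number $\sqrt{\lambda T}$ is not a property of the queue you would discover from tail estimates; it is precisely the optimized value of the drift bound $(\lambda T+x^2)/(2x)$. If you want the stated inequality with the stated threshold, you should use the parametrized Lyapunov bound rather than a separate light-traffic heuristic.
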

\begin{proof}
A detailed proof is given in Appendix \ref{app:prfdelay}.
 \end{proof}

\begin{figure}[t] %  figure placement: here, top, bottom, or page
   \centering
   \includegraphics[width=\columnwidth]{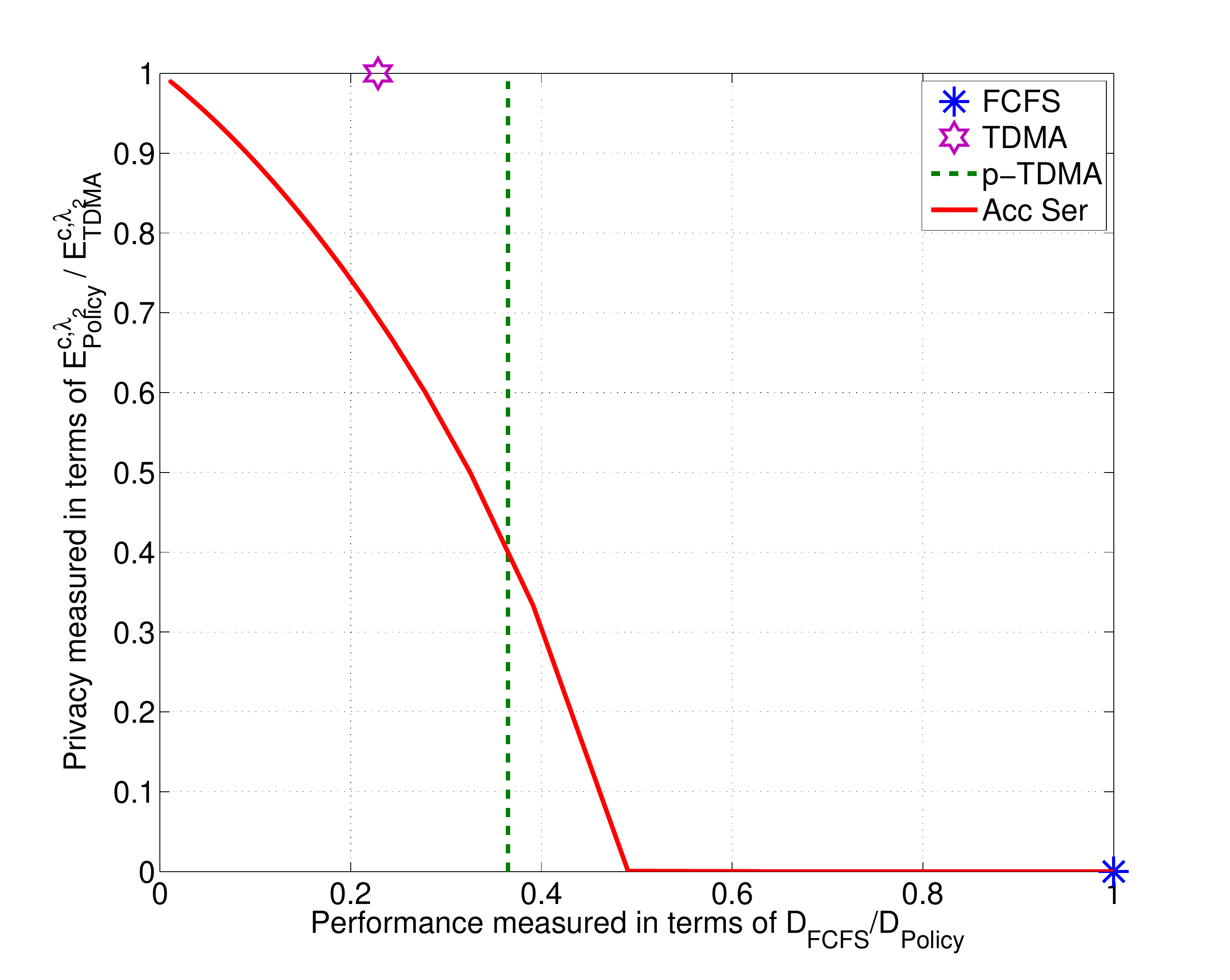} 
   \caption{Privacy-delay trade-off offered by different policies. $\frac{\mathcal{D}_{FCFS}}{\mathcal{D}_{Policy}}$ measures the performance of a policy in terms of delay offered by it, higher the ratio, the better.  $\frac{\mathcal{E}^{c,\lambda_2}_{Policy}}{\mathcal{E}^{c,\lambda_2}_{TDMA}}$ measures the privacy performance of the policy, higher the better. We consider the case when two users use the system, one user issues jobs at a rate 0.2 and the other at 0.45. By varying the parameters associated with each of the two derived policies, we obtain curves for the accumulate and serve and p-TDMA policies. The value of $c$, the clock period, is set to two.}
   \label{fig:privdelay}
\end{figure}

Some remarks about the mean delay computed:
\begin{itemize}
\item For FCFS, accumulate and serve, and p-TDMA, the mean delay is finite, equivalently the system is stable when the sum of the rates from all the users is less than 1, i.e., $\lambda=\sum_{i=1}^{M}\lambda_{i}<1$. However, for TDMA, the mean delay is finite only when the rate from each user is less than $1/M$, i.e., when $\lambda_{i}<1/M$. Therefore, TDMA loses out on multiplexing gains.  
\item $\lim\limits_{\lambda \to 1} \frac{\mathcal{D}_{\mathrm{Acc\ Serve}}}{ \mathcal{D}_{\mathrm{FCFS}}} =1$, i.e., at high traffic loads, the mean delay offered by the accumulate and serve policy is equal to that when FCFS policy is used. In the other extreme, $\lim\limits_{\lambda \to 0} \frac{\mathcal{D}_{\mathrm{Acc\ Serve}}}{ \mathcal{D}_{\mathrm{FCFS}}} = 1+\frac{T}{2}$, which can be quite large especially if $T$ is chosen to be large. For TDMA, the same limit would be $1+\frac{M}{2}$. For $p-TDMA$, it is $1/2+M$. Depending on the values of $M$ and $T$, either TDMA or accumulate and serve might offer the least delays. 
\item In Figure \ref{fig:privdelay}, we plot the privacy offered by each of these policies vs the inverse of the delay offered by them. We consider the case when two users use the scheduler, one of them issues jobs at a rate 0.45~and the other at rate 0.2. FCFS offers the least delay but also the least privacy, while TDMA offers the highest privacy, but at the expense of a significant performance loss. Accumulate and serve, as discussed earlier, is a parametrized policy that can trade-off privacy for delay. On the other hand, the p-TDMA policy can be tuned to achieve highest privacy without affecting the delay performance at all. However, longer the adaptation period, longer the arrival streams from the users have to be for the higher initial delays to average out.
\end{itemize}

\section{Concluding Remarks}

In this work, we discuss the timing side channel that arises naturally when two processes share a common resource. We consider a system where jobs from two users arrive at a processor. The task is to find a scheduling policy for the processor that minimizes information leakage about one user's job arrival pattern to the other while not significantly increasing the average delay seen by the jobs. We use the estimation error to quantify the information leak, and use it to demonstrate that the first come first served scheduling policy leaks significant information. We also propose two scheduling policies that provide guaranteed levels of privacy. Do these policies continue to be secure if the arrivals were a general arrival process instead of being Poisson? To answer this question, note that the operation of both these derived policies depend on the number of arrivals in a batch (in one accumulate period, or one adaptation period). Therefore, as long as the arrival process from Alice is such that, given the total number of arrivals in an accumulate/adaptation period, there is a large uncertainty in the arrival times of each jobs, the derived policies will still fare well.

\bibliographystyle{ieeetr}
\bibliography{sachin,ISITA10bib,negarlist1,Parv,tomsbib,negarlist,nikitaborisov,CCS07,career,hatswitch,tm,timingsidechannelrefs}

\begin{appendices}

\section{Proof  of Theorem \ref{thm:accmain}}\label{app:proofaccmain}

Suppose that $n$ jobs from Bob arrive in $K$ accumulate periods. Let $\{B_{1},B_{2},\ldots,B_{K}\}$, denoted in short by $B_{1}^{K}$, be the number of jobs arriving from Alice in accumulate periods $1,2,\ldots,K$ respectively. $B_{1}^{K}$ is the side information given to the attacker. The following lemma shows that the side information is the maximum information the attacker can hope to learn. 
\begin{lemma}\label{lem:observationsuseless}
If the attacker is given the side information $B_{1}^{K}$, the arrival and completion times of his jobs carry no information and can be discarded. Let $X_{k}$ be the random variable denoting the number of jobs that arrive from Alice in between clock ticks $k-1$ and $k$. The conditional expectation, 
$\mathbf{E}(X_k|t_{1}^{n},t^{'n}_{1},s_{1}^{n},B_{1}^{K})$ is the same as $\mathbf{E}(X_k|B_{1}^{K})$. 
\end{lemma}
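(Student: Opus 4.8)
The plan is to establish the identity in two reduction steps. First, I would show that the departure times $t_{1}^{'n}$ are a deterministic function of Bob's own choices $(t_{1}^{n}, s_{1}^{n})$ together with the genie information $B_{1}^{K}$; this lets us drop the conditioning on $t_{1}^{'n}$, leaving $\mathbf{E}(X_k \mid t_{1}^{n}, s_{1}^{n}, B_{1}^{K})$. Second, I would show that $X_k$ is conditionally independent of $(t_{1}^{n}, s_{1}^{n})$ given $B_{1}^{K}$, which removes the remaining conditioning and yields $\mathbf{E}(X_k \mid B_{1}^{K})$.

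For the first step I would argue by induction over the accumulate periods. The key structural observation is that, within a single accumulate period, the accumulate-and-serve policy serves the entire accumulated batch of user-1 jobs and then the entire accumulated batch of user-2 jobs; consequently the departure times produced during that period are completely determined by (i) the backlog carried into the period, (ii) the arrival times and sizes of Bob's jobs that fall in the period, and (iii) the \emph{number} $B_j$ of Alice's jobs accumulated in the period --- but \emph{not} by the fine-grained arrival times of Alice's jobs within the period, since all of them are unit-sized and served contiguously. Moreover the backlog at the end of each period satisfies a Lindley-type recursion whose only inputs are the per-period work totals, i.e.\ the sums of Bob's job sizes in each period and the counts $B_{\ell}$; all of these are functions of $(t_{1}^{n}, s_{1}^{n}, B_{1}^{K})$. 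Hence $t_{1}^{'n}$ is a deterministic function of $(t_{1}^{n}, s_{1}^{n}, B_{1}^{K})$, and conditioning additionally on it is vacuous.

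For the second step, recall that Alice's arrival process is a Poisson process independent of Bob's private randomness $R$, and that Bob's strategy is causal: all of his period-$j$ jobs are committed during the accumulation window $[(j-1)T, jT)$, before any of them are served, so the only departures he can have observed by then are departures of jobs from earlier periods, which --- by the first step applied to periods $1, \dots, j-1$ --- are deterministic functions of $(R, B_{1}^{j-1})$. Unrolling this induction, the whole observation tuple $(t_{1}^{n}, s_{1}^{n}, t_{1}^{'n})$ equals $g(R, B_{1}^{K})$ for a fixed measurable $g$. Since the Poisson process is independent of $R$, it remains so conditionally on $B_{1}^{K}$ (conditioning on a function of the Poisson process cannot create dependence on $R$); as $X_k$ is a function of the Poisson process, $X_k$ is therefore conditionally independent of $R$, and hence of $g(R, B_{1}^{K})$, given $B_{1}^{K}$. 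This gives $\mathbf{E}\big(X_k \mid g(R, B_{1}^{K}), B_{1}^{K}\big) = \mathbf{E}(X_k \mid B_{1}^{K})$, which is the claim.

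The step I expect to be the main obstacle is the first one: pinning down precisely that the accumulate-and-serve dynamics expose Alice's activity in a period only through the single integer $B_j$, even when backlog spills over from one accumulate period to the next, and then splicing this together with Bob's adaptivity without circularity. The inductive ordering over accumulate periods is exactly what defuses the adaptivity --- Bob's period-$j$ decisions can depend only on information that is itself already a function of $(R, B_{1}^{j-1})$ --- but the bookkeeping of the backlog recursion and of which jobs land in which period is where the care is needed.
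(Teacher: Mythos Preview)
Your proposal is correct and follows essentially the same two-step reduction as the paper: first drop $t_{1}^{'n}$ because it is a deterministic function of $(t_{1}^{n},s_{1}^{n},B_{1}^{K})$, then invoke the Markov chain $\{t_{1}^{n},s_{1}^{n}\}\text{--}B_{1}^{K}\text{--}X_{k}$ to drop the remaining conditioning. The paper's proof is much terser and simply asserts both facts; your inductive bookkeeping over accumulate periods and your explicit treatment of Bob's adaptivity and private randomness $R$ supply the justification the paper leaves implicit, but the underlying argument is the same.
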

\begin{IEEEproof}
Consider the conditional probability distribution $\Pr(X_k|B_{1}^{K},t_{1}^{n},t^{'n}_{1},s_{1}^{n})$. If accumulate and serve policy is used at the scheduler, the departure times of Bob's jobs are just a function of the arrival times and the size of his jobs, and the sequence $\{B_1,B_2,\ldots,B_K\}$. Therefore,
\begin{eqnarray}
\lefteqn{\Pr(X_k|B_{1}^{K},t_{1}^{n},t_{1}^{'n},s_{1}^{n})} \nonumber \\
 & =& \frac{\Pr(X_k,B_{1}^{K},t_{1}^{n},t_{1}^{'n},s_{1}^{n})}{\Pr(B_{1}^{K},t_{1}^{n},t_{1}^{'n},s_{1}^{n})} \nonumber \\
&=&  \frac{\Pr(X_k,B_{1}^{K},t_{1}^{n},s_{1}^{n})}{\Pr(B_{1}^{K},t_{1}^{n},s_{1}^{n})}  \frac{\Pr(t_{1}^{'n}|X_k,B_{1}^{K},t_{1}^{n},s_{1}^{n})}{\Pr(t_{1}^{'n}|B_{1}^{K},t_{1}^{n},s_{1}^{n})} \nonumber \\
&=&  \Pr(X_k|B_{1}^{K},t_{1}^{n},s_{1}^{n}) \label{eq:bgivesall},
%&=& P(X_k|B_1,B_2,\ldots,B_K,t_1,t_2,\ldots,t_n)
\end{eqnarray}
where \eqref{eq:bgivesall} follows because $t_{1}^{'n}$ is a function of $t_{1}^{n},s_{1}^{n}$ and $B_{1}^{K}$. As a consequence, the three sets of random variables $\{\arrtimes,\sizes\}-B_{1}^{K}-\{X_k\}\}$ form a Markov chain. Thus, $\Pr(X_k|B_{1}^{K},t_{1}^{n},s_{1}^{n})=\Pr(X_k|B_{1}^{K}).$
\end{IEEEproof}

As a result of Lemma \ref{lem:observationsuseless}, if the attacker is given $B_{1}^{K}$, the conditional distribution of $X_k$, and hence the resulting estimation error he will incur, is independent of the timing and number of jobs that the attacker sends. Also, if the resulting estimation error is denoted by  $\mathcal{E}_{\mathrm{Acc Serve}}^{c,B_{1}^{K}}$, then, from \eqref{eq:inequalitiesacc}, 
\begin{equation}
\mathcal{E}_{\mathrm{Acc Serve}}^{c,\lambda}\geq \mathcal{E}_{\mathrm{Acc Serve}}^{c,B_{1}^{K}}. \label{eq:eacclwr}
\end{equation} 
This is because, we are giving the attacker extra information which is not directly available to him. With this information, the attacker can only be better off in his estimation procedure. Hence, $\mathcal{E}_{\mathrm{Acc Serve}}^{c,B_{1}^{K}}$ bounds the estimation error that the attacker incurs, and we will shortly show that this is large enough. Next, we turn our attention to computing $\mathcal{E}_{\mathrm{Acc Serve}}^{c,B_{1}^{K}}$.

We first consider the case when the following hold:
\begin{description}
\item[\textbf{C1}] The accumulate period, $T$, is an integer multiple of $c$, the duration of a clock period.
\item[\textbf{C2}]  At time 0, the first clock tick aligns with the start of the first accumulate period.
\end{description}
If conditions \textbf{C1} and \textbf{C2} are met, every clock period is then completely contained within an accumulate period. 

Suppose that the clock period $k$ is contained within the accumulate period $m$. Let $X_k$ denote the number of arrivals from Alice in clock period $k$, and let $\overline{X}_k$ denote the number of arrivals from Alice in the other clock periods in accumulate period $m$. 
Then, it can be shown that, conditioned on the total number of arrivals in accumulate period $m$, $B_{m}$, $X_{k}$ is independent of the arrivals in the other accumulate periods. Therefore, $\Pr(X_k|B_{1}^{K})=\Pr(X_k|B_m)$. For a Poisson process, conditioned on the total number of arrivals in a given period, the arrival times themselves are uniformly distributed in the period \cite{Gallagernotes}. Therefore, conditioned on $B_{m}=\beta_{m}$, the distribution of $X_{k}$ is a Binomial random variable with parameters $(\beta_{m},\frac{c}{T})$. Given the value of $\beta_{m}$, the best estimate for $X_{k}$ is then $\frac{c}{T}\beta_{m}$. Thus, the estimation error incurred by the attacker in a slot in which $\beta_m$ jobs arrived is equal to the variance of the Binomial random variable equal to $\mathcal{E}_{\beta_{m}}=\beta_{m}\frac{c}{T}\left(1-\frac{c}{T}\right)$. Given that $\beta_m$ itself is a random variable, we need to average over its distribution to get the estimation error incurred by the attacker. $\beta_m$ is  Poisson with mean $\lambda_{2}T$, and the estimation error in a typical clock period is
\begin{eqnarray}
 \mathbf{E}_{\beta_m}[\mathcal{E}_{\beta_m}] &=& \mathbf{E}_{\beta_m}[\beta_{m}\frac{c}{T}\left(1-\frac{c}{T}\right)]\nonumber \\
&=& \lambda_{2}c \left(1-\frac{c}{T}\right) \nonumber \\
&=& \mathcal{E}_{Max}^{c} \left(1-\frac{c}{T} \right), \label{eq:eacc}
\end{eqnarray}
where \eqref{eq:eacc} follows from \eqref{eq:unifub}. 

When the conditions \textbf{C1} and \textbf{C2} do not hold, it can be shown the estimation error incurred by the attacker is greater than $\mathcal{E}_{Max}^{c} \left(1-\frac{c}{T} \right)$. It can be shown that the empirical average of the estimation error incurred in a clock period, as given in \eqref{eq:eacc} is also equal to the time average of the estimation error incurred by the attacker, which, as defined in \eqref{eq:privacy} is the privacy metric we are interested in. Therefore, $\mathcal{E}^{c,B_{1}^{K}}_{\textrm{Acc\ Serve}} = \mathcal{E}_{Max} \left(1-\frac{c}{T} \right)$.

\paragraph*{The case when $c>T$}
If $c>T$, then the side information $B_{1}^{K}$ of the attacker is finer than the information which he wishes to extract from Alice's arrival process. Therefore, in such a case, the estimation error incurred by the attacker is equal to zero. Recall that in reality the attacker does not have this side information, and he has to infer it based on the arrival and departure times of his jobs. We are not interested in analyzing this case as we expect that in a system design, the accumulate period will be chosen to be sufficiently larger than $c$.   \hfill 

\section{Stability of the accumulate and serve policy}\label{prf:foster}
Let $Q_{n}$ be the number of jobs waiting to be served at the beginning of the $n^{th}$ accumulate period, i.e., at time $nT^{-}$. Because $T$ is an integer multiple of the service time, note that there are no jobs with partially completed service at this time, and therefore the total work in the system at the end of accumulate period is always an integer. Let $A_{n}$ denote the total number of jobs that arrive from all the users in the $n^{th}$ accumulate period. Then, $Q_{n}$ is a Markov chain with the state space $\mathcal{Z}^{+}$, and state update equation given by 
\begin{equation}
Q_{n+1}=(Q_{n}+A_{n}-T)_{+} \label{eqn:queueacc},
\end{equation}
 where the notation $(i)_{+}$ stands for $\max\{0,i\}$. This is because, along with the jobs already waiting to be served at time $nT^{-}$, $A_{n}$ number of jobs arrive. Out of these, at most $T$ of them get served in the accumulate period. The stability of this queue also implies that the mean waiting times of the jobs is always bounded as long as the sum of arrival rates from all the users is less than the service rate.
 
For a state $q\in\mathcal{Z}^{+}$, define the  Lyapunov function to be $V(q)=q^{2}/2$. We will use Foster-Lyapunov's theorem to show that the Markov chain is positive recurrent. Let $\lambda$ be the sum of the arrival rates. Consider the case when each of the arrival process is a Poisson process, then the cumulative arrival process is a Poisson process as well. $A_{n}$ is then a Poisson random variable with parameter $\lambda T$. Given $Q_{n}=q$, the expected value of the drift is given by 
\begin{eqnarray}
\lefteqn{\mathbf{E}[V(Q_{n+1})-V(Q_{n})|Q_{n}=q]} \nonumber\\
&=& \mathbf{E}\left[\frac{((q+A_{n}-T)_{+})^{2}-q^{2}}{2}\right] \nonumber \\
&\leq& \frac{1}{2} \mathbf{E}\left[ (q+A_{n}-\alpha)^{2} -q^{2} \right], \ \alpha\leq T  \label{eq:squarebigger} \\
&=&\frac{1}{2} \Bigg( (q-\alpha)^{2}+\lambda T + (\lambda T)^{2} +2\lambda T( q-\alpha) - q^{2 } \Bigg) \nonumber\\
&=& \underbrace{\frac{\lambda T+(\alpha-\lambda T)^{2}}{2}}_{K_{1}} -\underbrace{(\alpha-\lambda T)}_{K_{2}}q,
\end{eqnarray}
where \eqref{eq:squarebigger} follows because, for real numbers $i,\alpha,\beta$, $((i-\beta)_{+})^{2}\leq (i-\alpha)^{2}, \forall \alpha \leq \beta $. By setting $\alpha$ to $T$, the condition for the system to be stable translates to $\lambda<1$. This is because, if $\lambda<1$, then $K_2>0$, and therefore the expected drift of the Lyapunov function is negative for $q$ large enough.
\hfill 

Furthermore, one can bound the mean number of jobs in the queue in the steady state by
\begin{eqnarray}
\mathbf{E}[Q_{n}] &\leq & \frac{K_{1}}{K_{2}} =\frac{\lambda T+(\alpha-\lambda T)^{2}}{2(\alpha-\lambda T)}. \label{eqn:qbound}
\end{eqnarray}
The bound in \eqref{eqn:qbound} holds for all values of $\alpha$ in the range $(\lambda T, T]$. Choosing the value of $\alpha$ that results in the tightest bound, it can be shown that 
\begin{eqnarray}
\mathbf{E}[Q_{n}] &\leq & \begin{cases}
\sqrt{\lambda T} & \mbox{\ if\ } \lambda< \lambda^{*}_{T} \\
\frac{\lambda +(1-\lambda)^{2}}{2(1-\lambda)} & \mbox{\ if \ } \lambda \geq \lambda^{*}_{T} 
\end{cases}, \label{eqn:qtightbound}
\end{eqnarray}
 where $\lambda^{*}_{T}=\frac{2T+1-\sqrt{1+4T}}{2T}$.

\section{Proof of Theorem \ref{THM:DELAY}}\label{app:prfdelay}
The FCFS system is a simple $M/D/1$ queue, and the mean delay can be derived from  Polleczek and Khinchine's formula for the $M/G/1$ system, refer \cite{renewal}. For the proof for TDMA, refer \cite{tdmadelay}. The proof for accumulate and serve and the p-TDMA policies follow.

\subsection{Mean delay of accumulate and serve policy}
The arriving jobs are stored in the buffer till the start of the next accumulate period. A job, chosen at random gets delayed by $T/2$ time units at this stage. From the point of view of the processor, the inputs to the queue are batch arrivals which arrive every $T$ time units, and the size of a batch is a Poisson random variable. The average delay across all jobs does not depend on the order in which jobs are served by the server, as long as the server does not idle. For delay analysis, we can therefore assume that the server serves these jobs in FCFS manner. In reality, recall that the server serves all the jobs from one user followed by all the jobs from the other user.

Let $A_{k}$ denote the size of the batch that arrives at the end of the accumulate period $k$ and $Q_{k}$ be the number of jobs that are not yet served at the end of period $k$. By the FCFS assumption, $Q_{k}$ is also the time the arriving batch waits before it gets served. The queue update equation is given in \eqref{eqn:queueacc}, and consequently, the bounds derived in Appendix \ref{prf:foster} hold. Therefore, the mean waiting time before a batch of jobs starts getting served is upper bounded by $ \frac{\lambda +(1-\lambda)^{2}}{2(1-\lambda)}  \mathbf{I}_{\{\lambda\geq\lambda^{*}_{T}\}}+\sqrt{\lambda T}\mathbf{I}_{\{\lambda<\lambda^{*}_{T}\}}$. 

In order to compute the mean delay experienced by a job, first note that a job drawn at random from the first $K$ jobs, where $K$ is a large number, has a higher chance of belonging to a bigger batch than a smaller one. In fact, the probability that a job drawn at random belongs to a batch of size $b$ is given by $\frac{b \mathbf{P}(A_{k}=b)}{\mathbf{E}[A_{k}]}$. Given that the job is from a batch of size $b$, the mean number of jobs ahead of it can be shown to be  $\frac{b-1}{2}$, consequently, conditional on the batch size to be $b$, the job has to wait for an additional time of $\frac{b-1}{2}$ time units after the batch starts service before it can get served. Averaging over $b$, we can get the additional waiting time of a jobs drawn at random before it gets served to be
\begin{eqnarray}
\sum_{b} b \frac{{\Pr}(A_{k}=b)}{\mathbf{E}[A_{k}]} \left(\frac{b-1}{2}\right)= \frac{\mathbf{E}[A_{k}^{2}]-\mathbf{E}[A_{k}]}{2\mathbf{E}[A_{k}]} = \frac{\lambda T}{2}
\end{eqnarray} 

Once it gets to service, the service time of the job is a fixed 1 time unit. Therefore, the delay experienced by a job drawn at random, which is the mean delay of offered by the scheduling policy is bounded by \eqref{eq:daccserve}.

\subsection{Mean delay of p-TDMA policy}
If the arrival process from each user is a Poisson process of rate $\lambda_{i}$, in the steady state, the empirical arrival rates converge to the true arrival rates. User $i$ gets service in a time slot with probability $\lambda_{i}/\lambda$. Let $Q_{n}$ be the number of unserved jobs from user $i$ at the beginning of time slot $n$. We then have the following queuing equation,
$$ Q_{n+1} = \left( Q_{n} - D_{n} \right)_{+} + A_{n} , $$
where $A_{n}$ denotes the number of job arrivals between times $n$ and $n+1$, which is a Poisson random variable with mean $\lambda_{i}$, and $D_{n}$ is a Bernoulli random variable which takes the value 1 if user $i$ gets served in time slot $n$, which happens with probability $\lambda_{i}/\lambda$, or 0 otherwise. The queuing equation holds because, the system serves only jobs that have already arrived by time $n$ in time slot $n$.

In the steady state, both $Q_{n}$ and $Q_{n+1}$ have the same distribution. Let $q_{i}\doteq \Pr(Q_{n}=i)=\Pr(Q_{n+1}=i)$, and $\mathcal{Q}(z) \doteq \sum\limits_{i=0}^{\infty} q_{i}z^{i}$ be the Z-transform of the steady state distribution. We then have
\begin{eqnarray}
\mathcal{Q}(z) &=& \mathbf{E}[z^{Q_{n+1}}] = \mathbf{E}[z^{ \left( Q_{n} - D_{n} \right)_{+}+ A_{n}}] \label{eq:followsfromqueuing}\\
&=& q_{0} \mathbf{E}[z^{A_{n}}] \left( q_{0} \mathbf{E}[z^{(-D_{n})_{+}}] + \sum\limits_{i=1}^{\infty} q_{i} \mathbf{E} [z^{i-D_{n}}] \right) \nonumber \\
&=& \frac{q_{0}(\lambda_{i}/\lambda)e^{\lambda_{i}(z-1)}(1-z^{-1})}{1-(1-(\lambda_{i}/\lambda)+(\lambda_{i}/\lambda)z^{-1})e^{\lambda_{i}(z-1)}},
\end{eqnarray}
where \eqref{eq:followsfromqueuing} follows from the queuing equation. Use the fact that $\lim\limits_{z\to 1} \mathcal{Q}(z)=1$ to solve for $q_{0}=(1-\lambda)$. The mean number of unserved jobs from user $i$ in the steady state is then equal to $\sum\limits_{i=0}^{\infty}i q_{i}=\lim \limits_{z\to1} \dot{\mathcal{Q}}(z),$ which can be computed. Using Little's law to relate the average queue length to the mean delay experienced by a job from user $i$, and averaging it across all users of the system, we get the result given in equation \eqref{eq:dptdma}. The result also includes the time $1/2$, the mean time an arriving job waits before the nearest multiple of one time unit.
\end{appendices}

\end{document}